\newif\ifprocs
\newtheorem{theorem}{Theorem}[section]
\newtheorem{lemma}[theorem]{Lemma}
\newtheorem{definition}[theorem]{Definition}
\newtheorem{observation}[theorem]{Observation}
\newtheorem{corollary}[theorem]{Corollary}
\newtheorem{hypothesis}[theorem]{Hypothesis}
\theoremstyle{plain}
\newtheorem{claim}[theorem]{Claim}
\newtheorem*{rep@theorem}{\rep@title}
\newcommand{\newreptheorem}[2]{%
\newenvironment{rep#1}[1]{%
 \def\rep@title{#2 \ref{##1}}%
 \begin{rep@theorem}}%
 {\end{rep@theorem}}}
\newcommand{\ProblemName}[1]{\textsf{#1}}
\newcommand{\MF}{\ProblemName{Max-Flow}\xspace}
\newcommand{\APMF}{\ProblemName{All-Pairs Max-Flow}\xspace}
\newcommand{\APR}{\ProblemName{All-Pairs Reachability}\xspace}
\newcommand{\TOV}{\ProblemName{$3$OV}\xspace}
\DeclareMathOperator{\SOL}{SOL}
\DeclareMathOperator{\dist}{dist}
\newcommand\eps{\varepsilon}
\newcommand\tO{\ensuremath{\tilde O}}
\newcommand{\T}{\mathcal{T}}
\newcommand{\TG}{\mathcal{T}^*} 
\providecommand{\set}[1]{{\{#1\}}}
\providecommand{\card}[1]{\lvert#1\rvert}
\begin{document}

\author[1]{Amir Abboud}
\author[2]{Robert Krauthgamer%
  \thanks{Work partially supported by ONR Award N00014-18-1-2364, the Israel Science Foundation grant \#1086/18, and a Minerva Foundation grant.
    Part of this work was done while the author was visiting the Simons Institute for the Theory of Computing.
  }}
\author[3]{Ohad Trabelsi%
  \thanks{Work partly done at IBM Almaden.}}
\affil[1]{IBM Almaden Research Center. Email: \texttt{amir.abboud@ibm.com}}
\affil[2]{Weizmann Institute of Science. Email: 
		 \texttt{robert.krauthgamer@weizmann.ac.il}}
 \affil[3]{Weizmann Institute of Science. Email: \texttt{ohad.trabelsi@weizmann.ac.il}}

  \title{New Algorithms and Lower Bounds for All-Pairs Max-Flow in Undirected Graphs\thanks{A full version appears at \href{http://arxiv.org/abs/1901.01412}{arXiv:1901.01412}}}

\maketitle


\begin{abstract}
  We investigate the time-complexity of the \APMF problem:
  Given a graph with $n$ nodes and $m$ edges,
  compute for all pairs of nodes the maximum-flow value between them. 
  If \MF (the version with a given source-sink pair $s,t$) can be solved in time $T(m)$, then an $O(n^2) \cdot T(m)$ is a trivial upper bound.
  But can we do better?

For directed graphs, recent results in fine-grained complexity suggest that this time bound is essentially optimal.
In contrast, for undirected graphs with edge capacities,
a seminal algorithm of Gomory and Hu (1961) runs in much faster time $O(n)\cdot T(m)$. 
Under the plausible assumption that \MF can be solved in near-linear time
$m^{1+o(1)}$, this half-century old algorithm yields an $nm^{1+o(1)}$ bound. 
Several other algorithms have been designed through the years,
including $\tilde{O}(mn)$ time for unit-capacity edges (unconditionally),
but none of them break the $O(mn)$ barrier. 
Meanwhile, no super-linear lower bound was shown for undirected graphs.

We design the first hardness reductions for \APMF in undirected graphs,
giving an essentially optimal lower bound for the \emph{node-capacities}
setting.
For edge capacities, our efforts to prove similar lower bounds have failed, 
but we have discovered a surprising new algorithm
that breaks the $O(mn)$ barrier for graphs with unit-capacity edges!
Assuming $T(m)=m^{1+o(1)}$, our algorithm runs in time $m^{3/2 +o(1)}$ and outputs a cut-equivalent tree (similarly to the Gomory-Hu algorithm). 
Even with current \MF algorithms we improve state-of-the-art
as long as $m=O(n^{5/3-\varepsilon})$.
Finally, we explain the lack of lower bounds by proving a \emph{non-reducibility} result. 
This result is based on a new quasi-linear time $\tO(m)$ \emph{non-deterministic} algorithm for constructing a cut-equivalent tree and may be of independent interest.
\end{abstract}

\newpage

\section{Introduction}

In the maximum $st$-flow problem (abbreviated \MF),
the goal is to compute the maximum value of a feasible flow 
between a given pair of nodes $s,t$ (sometimes called \emph{terminals})
in an input graph.%
\footnote{Throughout, we focus on computing the value of the flow
  (rather than an actual flow),
  which is equal to the value of the minimum $st$-cut
  by the famous max-flow/min-cut theorem~\cite{FF56}.
}
Determining the time complexity of this problem is one of the most prominent open questions in fine-grained complexity and algorithms.
The best running time known for directed (or undirected)
graphs with $n$ nodes, $m$ edges, and largest integer capacity $U$ is
$\tO(\min\{m^{10/7}U^{1/7}, m \sqrt{n}\log U\})$ \cite{Madry16,LS14},
where throughout $\tO(f)$ hides logarithmic factors
and stands for $O(f \log^{O(1)} f)$. 
To date, there is no $\Omega(m^{1+\eps})$  lower bound for this problem,
even when utilizing one of the popular conjectures of fine-grained complexity, such as the Strong Exponential-Time Hypothesis (SETH) of~\cite{ImpaSETH}.%
\footnote{SETH asserts that for every fixed $\varepsilon>0$
  there is an integer $k\geq 3$,
  such that kSAT on $n$ variables and $m$ clauses
  cannot be solved in time $2^{(1-\varepsilon)n} m^{O(1)}$.  
}
This gap is regularly debated among experts,
and a common belief is that such a lower bound is not possible,
since a near-linear-time algorithm exists but is not yet known.
There is also a formal barrier for basing a lower bound for \MF on SETH,
  as it would refute the so-called
  Non-deterministic SETH (NSETH)~\cite{carmosino2016nondeterministic}. 
We will henceforth assume that \MF can be solved in time $m^{1+o(1)}$,
and investigate some of the most important questions that remain open under this favorable assumption.
(None of our results need this assumption; it only serves for highlighting their significance.)
 
Perhaps the most natural next-step after the $s,t$ version is the ``all-pairs'' version (abbreviated \APMF), 
where the goal is to solve \MF for all pairs of nodes in the graph. 
This multi-terminal problem, dating back to 1960 \cite{Mayeda60,Chien60}, 
is the main focus of our work:
\begin{quotation}
\emph{What is the time complexity of computing \MF between all pairs of nodes?}
\end{quotation}

We will discuss a few natural settings,
e.g., directed vs.\ undirected, or node-capacities vs.\ edge-capacities,
in which the answer to this question may vary. 
A trivial strategy for solving this problem (in any setting)
is to invoke a $T(m)$-time algorithm for the $s,t$ version $O(n^2)$ times,
giving a total time bound of $O(n^2)\cdot T(m)$,
which is $n^2 \cdot m^{1+o(1)}$ under our favorable assumption. 
But one would hope to do much better,
as this all-pairs version arises in countless applications, 
such as a graph-clustering approach for image segmentation \cite{WL93}.

In undirected edge-capacitated graphs,
a seminal paper of Gomory and Hu~\cite{GH61} showed in 1961 how to solve \APMF 
using only $n-1$ calls to a \MF algorithm, rather than $O(n^2)$ calls,
yielding an upper bound $O(n) \cdot T(m)$.
(See also~\cite{Gusfield90} for a different algorithm
where all the $n-1$ calls can be executed on the original graph.) 
This time bound has improved over the years,
following the improvements in algorithms for \MF, 
and under our assumption it would ultimately be $n\cdot m^{1+o(1)}$. 
Even more surprisingly, Gomory and Hu showed that all the $n^2$ answers
can be represented using a single tree,
which can be constructed in the same time bound.
Formally, 
A \emph{cut-equivalent tree} to a graph $G$ 
is an edge-capacitated tree $T$ on the same set of nodes,
with the property that for every pair of nodes $s,t$,
every minimum $st$-cut in $T$ yields a bipartition of the nodes
which is a minimum $st$-cut in $G$, and of the same value as in $T$.%
\footnote{Notice that a minimum $st$-cut in $T$ consists of a
  single edge that has minimum capacity along the unique $st$-path in $T$,
  and removing this edge disconnects $T$ to two connected components.  
  A \emph{flow-equivalent tree} has the weaker property that
  for every pair of nodes $s,t$,
  the maximum $st$-flow value in $T$ equals that in $G$.
  The key difference is that flow-equivalence maintains only the \emph{values}
  of the flows (and thus also of the corresponding cuts).
}
See also~\cite{GT01} for an experimental study,
and the Encyclopedia of Algorithms~\cite{Panigrahi16} for more background. 
The only algorithm that constructs a cut-equivalent tree
without making $\Omega(n)$ calls to a \MF algorithm
was designed by Bhalgat, Hariharan, Kavitha, and Panigrahi~\cite{BHKP07}.
It runs in time $\tilde{O}(mn)$ in unit-capacity graphs
(or equivalently, if all edges have the same capacity),
and utilizes a tree-packing approach that was developed in~\cite{Cole03,HKP07},
inspired by classical results of~\cite{Gabow95} and~\cite{Edmonds70}. 
However, if \MF can indeed be computed in near-linear time,
then none of the later algorithms beat by a polynomial factor
the time bound $n\cdot m^{1+o(1)}$ of Gomory and Hu's half-century old algorithm.

The time complexity of \APMF becomes higher in settings
where Gomory and Hu's ``tree structure'' \cite{GH61} does not hold. 
For instance, in node-capacitated graphs
(where the flow is constrained at intermediate nodes, rather than edges) flow-equivalent trees are impossible, 
since there could actually exist $\Omega(n^2)$ different maximum-flow values
in a single graph \cite{HL07} (see therein 
also an interesting exposition of certain false claims made earlier).%
\footnote{Granot and Hassin~\cite{GH86} considered
  a related but different notion of minimum $st$-cuts with node capacities,
  where an equivalent tree exists and can be computed.
}
Directed edges make the all-pairs problem even harder;
in fact, in this case node-capacities and edge-capacities are equivalent,
and thus this setting does not admit flow-equivalent trees,
see~\cite{Mayeda62,Jelinek63,HL07}. 
In the last decade, different algorithms were proposed to beat the trivial $O(n^2)\cdot T(m)$ time bound in these harder cases.
The known bound for general graphs is $O(m^\omega)$,
due to Cheung, Lau, and Leung~\cite{CLL13}, 
where $\omega<2.38$ is the matrix multiplication exponent. 
A related version, which is obviously no harder than \APMF,
is to ask (among all pairs of nodes) only for flow values that are at most $k$,
assuming unit node capacities; 
for example, the case $k=1$ is the transitive closure problem (reachability). 
For $k=2$, an $\tO(n^\omega)$-time algorithm was shown in~\cite{G+17}, 
and very recently a similar bound was achieved for all $k=O(1)$~\cite{A+18}.
The aforementioned papers~\cite{CLL13,G+17,A+18}
also present improved algorithms for acyclic graphs (DAGs).
In addition, essentially optimal $\tilde{O}(n^2)$-time algorithms
were found for \APMF in certain graph families,
including small treewidth \cite{ACZ98},
planar graphs \cite{LNSW12}, and
surface-embedded graphs \cite{BENW16}.

The framework of fine-grained complexity has been applied to the all-pairs problem in a few recent papers, although its success has been limited to the directed case.
Abboud, Vassilevska-Williams, and Yu~\cite{AVY15} proved SETH-based lower bounds for some multi-terminal variants of \MF,
such as the single-source all-sinks version, but not all-pairs.
Krauthgamer and Trabelsi~\cite{KT18} proved that \APMF cannot be solved
in time $O(n^{3-\eps})$, for any fixed $\eps>0$, unless SETH is false,
even in the sparse regime $m = n^{1+o(1)}$.
This holds also for unit-capacity graphs, and it essentially settles the complexity of the problem for directed sparse graphs, showing that the $O(n^2 )\cdot T(m)$ upper bound is optimal if one assumes that $T(m) = m^{1+o(1)}$.
Recently, Abboud et al.~\cite{A+18} proved a conditional lower bound that
is even higher for dense graphs,
showing that an $O(n^{\omega+1-\eps})$-time algorithm
would refute the $4$-Clique conjecture.
However, no non-trivial lower bound is known for undirected graphs.

\subsection{The Challenge of Lower Bounds in Undirected Graphs} 
Let us briefly explain the difficulty in obtaining lower bounds for undirected graphs.
Consider the following folklore reduction from Boolean Matrix Multiplication (BMM) to \APR in directed graphs
(the aforementioned special case of \APMF with $k=1$). 
In BMM the input is two $n\times n$ boolean matrices $P$ and $Q$,
and the goal is to compute the product matrix $R$ given by
\[
  R(a,c) := \vee_{b=1}^n \big( P(a,b) \wedge Q(b,c) \big), 
  \qquad
  \forall a,c \in [n]. 
\]
Computing $R$ can be reduced to \APR as follows.
Construct a graph with three layers $A,B,C$ with $n$ nodes each,
where the edges are directed $A \to B \to C$ and represent the two matrices:
$a \in A$ is connected to $b \in B$ iff $P(a,b) =1$; 
and $b \in B$ is connected to $c \in C$ iff $Q(b,c)=1$.
It is easy to see that $R(a,c)=1$ iff node $a \in A$ can reach node $c \in C$ (via a two-hop path).

This simple reduction shows an $n^{\omega-o(1)}$ lower bound for \APR
in dense directed graphs assuming the BMM conjecture.
Higher lower bounds can be proved by more involved reductions
that utilize the extra power of flow over reachability, 
e.g., an $n^{3-o(1)}$ lower bound in sparse directed graphs assuming SETH \cite{KT18}. 
Nevertheless, this simple reduction illustrates the main difficulty
in adapting such reductions to undirected graphs.

Consider the same construction but with \emph{undirected edges}
(i.e., without the edge orientations). 
The main issue is that paths from $A$ to $C$ can now have more than two hops
-- they can crisscross between two adjacent layers before moving on to the next one.
Indeed, it is easy to construct examples in which the product $R(a,c)=0$ but there is a path from $a$ to $c$ (with more than two hops).
Even if we try to use the extra power of flow, giving us information about the number of paths rather than just the existence of a path, it is still unclear how to distinguish flow that uses a two-hop path (YES case) from flow that uses only longer paths (NO case).

A main technical novelty of this work is a trick to overcome this issue. The high-level idea is to design large gaps between the capacities of nodes in different layers in order to incentivize flow to move to the ``next layer''.
Let us exhibit how this trick applies to the simple reduction above. 
Remove the edge orientations from our three-layer graph,
and introduce node capacities,
letting all nodes in $B$, the middle layer, have capacity $2n$,
and all nodes in $A \cup C$, the other two layers, have capacity $1$.
Now, consider the maximum flow from $a \in A$ to $c \in C$.
If $R(a,c)=1$ then there is a two-hop path through some $b \in B$,
which can carry $2n$ units of flow,
hence the maximum-flow value is at least $2n$.
On the other hand, if $R(a,c)=0$ then every path from $a$ to $c$
must have at least four hops, and a maximum flow must be composed of such paths.
Any such path must pass through at least one node in
$A \cup C\setminus\{a,c\}$, whose capacity is only $1$, 
hence the maximum flow is bounded by $|A \cup C \setminus \{ a,c \}| = 2n-2$.
This proves the same $n^{\omega-o(1)}$ lower bound as before,
but now for undirected graphs with node capacities.%
\footnote{The argument can be simplified a bit if we allow nodes of capacity $0$. We also remark that restricting the flow to obey the capacities of the source and the sink makes the problem much easier; this is the version considered by Granot and Hassin \cite{GH86} and mentioned in the previous footnote.}
In Section~\ref{sec:lower} we utilize this trick in a more elaborate way to prove stronger lower bounds.

\subsection{Our Results}

Our main negative result is the first (conditional) lower bound for All-Pairs Max-Flow that holds in undirected graphs.
For sparse, node-capacitated graphs we are able to match the lower bound $n^{3-o(1)}$ that was previously known only for directed graphs~\cite{KT18},
and it also matches the hypothetical upper bound $n^{3+o(1)}$.

\begin{theorem}
\label{thm1}
Assuming SETH, no algorithm can solve \APMF in undirected graphs on $n$ nodes and $O(n)$ edges with node capacities in $[O(n^2)]$ in time $O(n^{3-\eps})$
for some fixed $\eps>0$.
\end{theorem}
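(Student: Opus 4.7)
My plan is to reduce from $3$-OV: given three families $A, B, C$ of $N$ binary vectors of dimension $d = \Theta(\log N)$, decide whether there is a triple $(a,b,c) \in A \times B \times C$ with $a_j \wedge b_j \wedge c_j = 0$ for every coordinate $j \in [d]$. Under SETH, $3$-OV requires $N^{3-o(1)}$ time. I will build an undirected, node-capacitated graph $G$ with $n = O(N)$ nodes and $m = O(N)$ edges, all capacities in $[O(n^2)]$, such that for each pair $(a,c) \in A \times C$, the maximum $v_a$-to-$v_c$ flow value in $G$ crosses a fixed threshold iff some $b \in B$ is orthogonal to $(a,c)$. Then an $O(n^{3-\eps})$-time \APMF algorithm would solve $3$-OV in $O(N^{3-\eps})$ time and refute SETH.

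\textbf{Construction.} I would take the directed sparse construction of~\cite{KT18} as a skeleton: source-side gadgets $v_a$ for $a\in A$; two coordinate layers $\{u^A_j\}_{j\in[d]}$ and $\{u^C_j\}_{j\in[d]}$; a middle block with a copy $v_b$ of each $b \in B$; and sink-side gadgets $v_c$ for $c \in C$. The edge set encodes the vectors in the usual way: $v_a \sim u^A_j$ when $a_j=1$, $v_b \sim u^A_j$ and $v_b \sim u^C_j$ when $b_j=1$, and $v_c \sim u^C_j$ when $c_j=1$. Now all edges are undirected, and I apply the layer-capacity trick from the introduction, but with carefully chosen gaps to enforce the intended layered flow: the sink/source vector nodes $v_a, v_c$ get tiny capacity (just enough to carry the threshold), the coordinate nodes $u^A_j, u^C_j$ get moderate capacity, the middle nodes $v_b$ get a capacity of order $\Theta(N^2)$, and any auxiliary ``hub'' node needed to keep the graph sparse (e.g.\ binary trees replacing high-degree coordinate nodes) is either pushed to the large tier or duplicated so no single small-capacity node is a bottleneck for legitimate flow. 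The threshold is chosen between what one orthogonal $b$ contributes and what \emph{any} path that detours through an extra $v_{a'}$ or $v_{c'}$ can contribute.

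\textbf{Analysis.} In the YES case, an orthogonal $b$ yields, for each coordinate where $b$ and $a$ agree and each coordinate where $b$ and $c$ agree, a short undirected path $v_a - u^A_j - v_b - u^C_{j'} - v_c$; together these paths are node-disjoint enough (after small parallelization at the vector endpoints) to carry flow exceeding the threshold, exploiting the $\Theta(N^2)$ capacity at $v_b$. In the NO case, for every $b$ there is a coordinate $j^*$ with $a_{j^*}=b_{j^*}=c_{j^*}=1$; any two-hop crossing from $v_a$-side to $v_c$-side through $v_b$ must then use the shared bottleneck node $u^A_{j^*}$ or $u^C_{j^*}$, whose moderate capacity caps the contribution of $v_b$. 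Alternative ``crisscross'' paths that bounce into another $v_{a'}$ or $v_{c'}$ must pay the small capacity of that vector-node, so summing over all $b$ still falls below the threshold. A concise way to finish the NO direction is to exhibit an explicit $v_a$-$v_c$ vertex cut of capacity below the threshold (the union of the moderate-capacity bottleneck coordinate nodes, one per $b$), rather than enumerating paths.

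\textbf{Main obstacle.} The main obstacle is exactly this last point: in an undirected graph the space of alternative paths is much richer, so the NO-case upper bound must be proved via a global vertex-cut argument rather than path counting, and the capacity hierarchy must be engineered so that the cut certifying the NO case simultaneously (i) stays within the $[O(n^2)]$ budget, (ii) is strictly smaller than what a single orthogonal $b$ alone can send, and (iii) survives the sparsification tricks (binary-tree subdivisions of high-degree coordinate hubs) needed to keep $m=O(n)$. Balancing the three capacity tiers so that the YES/NO gap is $\Omega(1)$-multiplicative after all $O(N)$ alternatives are accounted for is where I expect the bulk of the technical work to lie.
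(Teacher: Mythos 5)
Your high-level strategy --- reduce from $3$-OV and use the layer-capacity trick to incentivize ``layer-wise'' flow in an undirected graph --- matches the paper's, and you correctly flag the main difficulty. But as written, the gadget does not create a YES/NO gap and the proposed NO-case cut does not separate $v_a$ from $v_c$. With a single node $v_b$ per vector and edges $v_a \sim u^A_j$, $v_b \sim u^A_j$, $v_b \sim u^C_{j'}$, $v_c \sim u^C_{j'}$, a four-hop path $v_a - u^A_j - v_b - u^C_{j'} - v_c$ exists whenever $a_j=b_j=1$ and $b_{j'}=c_{j'}=1$ for \emph{some} $j,j'$, with no requirement that $j=j'$. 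Orthogonality of $(a,b,c)$ only forbids the case $j=j'$, so it barely changes the set of available paths. Your NO-case claim that paths through $v_b$ are forced through $u^A_{j^*}$ or $u^C_{j^*}$ (where $a_{j^*}=b_{j^*}=c_{j^*}=1$) is therefore false: paths via any other coordinates avoid $u_{j^*}$, and the ``one bottleneck per $b$'' collection is not a vertex cut. You also have the direction of the gap inverted: in the standard encoding, orthogonality \emph{removes} the common-coordinate short path, so an orthogonal triple should yield \emph{lower} flow for that pair, not higher.

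The paper overcomes exactly this with a richer $\beta$-gadget and a path-counting (not cut-exhibition) argument. Each $\beta \in U_2$ is replaced by a cloud of $d+1$ nodes: $\beta_1,\ldots,\beta_d$ of capacity $1$ and an extra $\beta'$ of capacity $d-1$; a single auxiliary node $v_B$ of capacity $n(d-1)$ aggregates overflow from all $\beta'$. A non-orthogonal triple $(\alpha,\beta,\gamma)$ admits one ``bonus'' path $\alpha \to C^1_i \to \beta_i \to C_i \to \gamma$ on a shared coordinate $i$, so if no triple is orthogonal every cloud carries $d$ units and the total is $nd$; if $\beta_\Phi$ is orthogonal to $(\alpha_\Phi,\gamma_\Phi)$, its cloud can carry at most $d-1$ units, and the proof shows that any flow of value $nd$ must saturate $N(\alpha_\Phi)$ and then push a $d$-th first-passing path into the $\beta_\Phi$-cloud, forcing a collision at some unit-capacity $\tilde\beta_i$ --- a flow-decomposition contradiction rather than an explicit separating cut. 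Finally, the undirected version is obtained in a second stage by scaling every capacity outside $V_1\cup V_3$ by $2n$, so that crisscross leakage through $(V_1\cup V_3)\setminus\{\alpha_\Phi,\gamma_\Phi\}$ is at most $2n-2 < 2n$ and cannot close the scaled gap between $2n^2 d$ and $(nd-1)\cdot 2n$. None of these ingredients (the $\beta$-cloud, $v_B$, the saturation/path-counting argument, or the two-stage scaling) appear in your proposal, and the single-node $v_b$ version cannot be salvaged without them.
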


Our lower bound holds even under assumptions that are weaker than SETH (see Section~\ref{sec:lower}), as we reduce from the $3$-Orthogonal-Vectors (\TOV) problem.
At a high level, it combines the trick described above for overcoming the challenge in undirected graphs, with the previous reduction of \cite{KT18} from \TOV to the directed case.
However, both of these ingredients have their own subtleties and fitting them together requires adapting and tweaking them very carefully.

\medskip

Following our Theorem~\ref{thm1}, the largest remaining gap in our understanding of \APMF concerns the most basic and fundamental setting: undirected graphs with edge capacities. What is the time complexity of computing a cut-equivalent tree? The upper bound has essentially been stuck at $n \cdot m^{1+o(1)}$ for more than half a century, while we cannot even rule out a near-linear $m^{1+o(1)}$ running time.
To our great surprise, after a series of failed attempts at proving any lower bound, we have noticed a simple way to design a new algorithm for computing cut-equivalent trees for graphs with unit-capacities, breaking the longstanding $mn$ barrier!

%
\begin{theorem}\label{Theorem:Algorithm}
There is an algorithm that, given an undirected graph $G$ with $n$ nodes and $m$ edges (and unit edge capacities) and parameter $1\le d \le n$, constructs a cut-equivalent tree in time $\tilde{O}( md + \Phi(m,n,d) )$, where $\Phi(m,n,d)=\max\{\sum_{i=1}^{m/d} T(m,n,F_i):F_1,\dots,F_{m/d}\geq 0,\sum_{i=1}^{m/d} F_i\leq 2m \}$ and $T(m,n,F)$ is the time bound for \MF. 
\end{theorem}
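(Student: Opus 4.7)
The plan is to design a parameterized Gomory-Hu-style algorithm that splits the cut-equivalent tree construction into a ``cheap'' phase, handling all pairs of min-cut value at most $d$ in aggregate time $\tO(md)$, and an ``expensive'' phase, handling the remaining heavy pairs via at most $m/d$ exact \MF calls whose flow values sum to at most $2m$. These two bounds are matched to the definition of $\Phi(m,n,d)$ so that the total running time becomes $\tO(md+\Phi(m,n,d))$.

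First I would establish the following structural claim for any cut-equivalent tree $T$ of a unit-capacity graph $G$ with $m$ edges: the number of tree edges of capacity greater than $d$ is at most $2m/d$, and the sum of these heavy capacities is at most $2m$. For the count, delete from $T$ all tree edges of capacity at most $d$; the resulting forest partitions $V$ into blocks in which every internal pair has min-cut value greater than $d$ in $G$. Any node in a non-singleton block then has degree greater than $d$ in $G$ (by taking a cut around that node within its block), so the total number of such nodes is at most $2m/d$; and since the ``heavy forest'' lives on exactly these non-singleton vertices, it has fewer than $2m/d$ edges. The sum bound would follow from a parallel double-counting that charges each heavy cut to the $G$-edges crossing it and exploits the laminar structure of cut-equivalent tree cuts together with the unit-capacity assumption to control the multiplicity.

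Second, I would implement the cheap phase using a tree-packing subroutine truncated at value $d$, in the spirit of \cite{BHKP07,Cole03,HKP07}: the full BHKP machinery returns a cut-equivalent tree in $\tO(mn)$ time, but capping the number of packed spanning trees at $d$ reduces the running time to $\tO(md)$ while still certifying every min-cut of value at most $d$. From such a truncated packing the algorithm extracts the light part of the cut-equivalent tree and contracts each $(d{+}1)$-edge-connected block of $G$ into a single super-node, leaving a contracted multigraph on $O(m/d)$ super-nodes. For the expensive phase, I would run a standard Gomory-Hu recursion on this contracted graph, which issues at most $m/d$ \MF calls; by the structural sum bound, the returned flow values $F_1,\dots,F_{m/d}$ satisfy $\sum_i F_i\le 2m$, so the cost of the expensive phase is at most $\Phi(m,n,d)$ by definition.

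The main obstacle will be making the cheap phase actually run in $\tO(md)$ rather than $\tO(mn)$: the full tree-packing procedure of BHKP does not obviously scale linearly in the truncation parameter, so I would have to argue that capping the packing and its downstream cut-recovery steps at value $d$ still identifies \emph{all} light cut-equivalent tree edges with their correct capacities and performs the block-contraction correctly. Proving the structural sum bound $\sum_i F_i\le 2m$ in full generality — beyond easy instances like $K_n$ or regular graphs — is the other delicate point, and I expect it to require a careful charging argument exploiting the nested, laminar structure of cut-equivalent tree cuts together with the unit-capacity assumption.
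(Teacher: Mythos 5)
Your proposal follows essentially the same two-phase strategy as the paper: a cheap phase that constructs the light part of the tree via a $d$-partial tree in $\tO(md)$ time, and an expensive phase that completes it via Gomory--Hu calls, with the degree argument bounding the number of calls by $O(m/d)$ and a sum bound $\sum_i F_i \le 2m$ controlling their cost. Three remarks on where your exposition diverges from what is actually needed.

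First, the $d$-partial tree construction in $\tO(md)$ time does not need to be re-derived by ``capping'' the tree-packing machinery: it is a stated result of Bhalgat, Hariharan, Kavitha, and Panigrahi (building on Hariharan--Kavitha--Panigrahi), which the paper imports as a black box (its Lemma~\ref{Lemma:Partial}). What you frame as the ``main obstacle'' is thus already settled in prior work.

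Second, your sum bound $\sum_{e\in T} c_T(e) \le 2m$ admits a far simpler proof than the laminar charging argument you anticipate needing. For any tree edge $(u,v)$, the min $uv$-cut in $G$ is at most $\min\{\deg_G(u),\deg_G(v)\}$, so $c_T(u,v) \le \deg_G(v)$ where $v$ is the child when the tree is rooted arbitrarily; summing over tree edges and noting the children are distinct gives $\sum_e c_T(e) \le \sum_v \deg_G(v) = 2m$. No laminarity or nesting of cuts is involved. (This is precisely the paper's Claim~\ref{claim:technical}, second part.)

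Third, your expensive phase is stated as ``run a standard Gomory--Hu recursion on this contracted graph'' of $O(m/d)$ super-nodes, but a Gomory--Hu tree on that contracted multigraph would only reproduce the light tree edges you already have; what you actually need is to \emph{split the super-nodes internally}. The correct formulation is to continue the Gomory--Hu recursion on $G$ starting from the $d$-partial tree as the intermediate tree, forming auxiliary graphs by contracting along it in the usual Gomory--Hu manner. That this is legitimate requires an argument, which the paper supplies via Lemma~\ref{lem:truncatedGH}: every $d$-partial tree is realizable as the intermediate tree of a truncated Gomory--Hu execution, so continuing from it yields a valid cut-equivalent tree. Your proposal implicitly assumes this compatibility but does not argue it, and the phrasing about a contracted multigraph is pointing at the wrong object.
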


Using the current bound on $T(m,n,F)$ we achieve running time $\tilde{O}(m^{3/2}n^{1/6})$,
and under the plausible hypothesis that $T(m,n)=m^{1+o(1)}$
our time bound becomes $m^{3/2 + o(1)}$.
In the regime of sparse graphs where $m=\tO(n)$ the previous best algorithm of Bhalgat et al. \cite{BHKP07} had running time $\tilde{O}(n^2)$,
whereas we achieve $\tO(n^{5/3})$, or conditionally $n^{3/2+o(1)}$.
In fact, we improve on their upper bound as long as $m=O(n^{5/3-\eps})$.
Clearly, this also leads to improved bounds for \APMF (with unit edge capacities), for which the best strategy known is to compute the tree
and then extract the answers in time $O(n^2)$.

\medskip

The main open question remains: Can we prove any super-linear lower bounds for the edge capacitated case in undirected graphs? Is there an $m^{1+\eps}$ lower bound under SETH for constructing a cut-equivalent tree?
Perhaps surprisingly, we prove a strong barrier for the possibility of such a result.

We follow the non-reducibility framework of Carmosino et al. \cite{carmosino2016nondeterministic}. 
Intuitively, if problem A is conjectured to remain hard for nondeterministic algorithms while problem B is known to become significantly easier for such algorithms, then we should not expect a reduction from A to B to exist. Such a reduction would allow the nondeterministic speedups for problem B to carry over to A.
To formalize this connection, Carmosino et al. introduce NSETH: the hypothesis that SETH holds against co-nondeterministic algorithms.
NSETH is plausible because it is not clear how a powerful prover could convince a sub-$2^{n}$-time verifier that a given CNF formula is \emph{not} satisfiable.
Moreover, it is known that refuting NSETH requires new techniques since it implies new circuit lower bounds.
Then, they exhibit nondeterministic (and co-nondeterministic) speedups for problems such as $3$-SUM and \MF (using LP duality), showing that a reduction from SAT to these problems would refute NSETH.
  
Our final result builds on Theorem~\ref{Theorem:Algorithm} to design
a quasi-linear time%
\footnote{We say that a time bound $T(n)$ is quasi-linear
  if it is bounded by $O(n\log^c n)$ for some positive constant $c>0$.
}
\emph{nondeterministic} algorithm for constructing a cut-equivalent tree.
This algorithm can perform nondeterministic choices and in the end,
outputs either a correct cut-equivalent tree or ``don't know'' (i.e., aborts),
however we are guaranteed that for every input graph there is a at least one sequence of nondeterministic choices leading to a correct output.
This result could have applications in computation-delegation settings and may be of interest in other contexts.
In particular, since our nondeterministic witness can be constructed deterministically efficiently, namely, in polynomial but super-linear time, 
it provides a potentially interesting \emph{certifying algorithm} \cite{certify_survey11,alkassar2011verification} (see \cite{Kunnemann18} for a recent paper with a further discussion of the connections to fine-grained complexity).
Our final non-reducibility result is as follows. 

\begin{theorem}
\label{thm3}
If for some $\eps>0$ there is a deterministic fine-grained reduction proving an $\Omega(m^{1+\eps})$ lower bound under SETH for constructing a cut-equivalent tree of an undirected unit edge-capacitated graph on $m$ edges, then NSETH is false.
\end{theorem}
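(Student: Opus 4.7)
The plan is to apply the non-reducibility framework of~\cite{carmosino2016nondeterministic}: it suffices to exhibit a nondeterministic algorithm for cut-equivalent tree construction running in time $m^{1+o(1)}$, after which combining any hypothetical deterministic SETH-based reduction with it yields a nondeterministic SAT algorithm that contradicts NSETH. The composition step is routine: a deterministic fine-grained reduction from $k$-SAT on $N$ variables to cut-equivalent tree on $m$ edges that yields an $\Omega(m^{1+\eps})$ SETH lower bound forces the output size to satisfy $m(N) \le 2^{N/(1+\eps) + o(N)}$, so plugging in a nondeterministic algorithm of running time $m^{1+o(1)}$ produces a nondeterministic $k$-SAT algorithm of running time $2^{N(1+o(1))/(1+\eps)} = 2^{(1-\delta)N}$ with $\delta = \eps/(1+\eps)-o(1) > 0$, contradicting NSETH.

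The bulk of the proof is designing the nondeterministic algorithm. I would let the prover guess the cut-equivalent tree $T$ (labeled with capacities $c_e$) together with auxiliary witnesses for the $n-1$ max-flow statements ``$\text{mincut}_G(u_e,v_e) = c_e$'' associated with the tree edges $e=(u_e,v_e)$. By LP duality each such statement is certified by a primal--dual pair (a flow of value $c_e$ and a matching cut), and a single max-flow on a graph with $m$ edges can be verified nondeterministically in $\tO(m)$ time. The ``$\le c_e$'' direction can be handled globally: the dual witness for a tree edge $e$ is the bipartition obtained by removing $e$ from $T$, and using heavy-path decomposition on $T$ one can tally, in a single $\tO(m)$ pass over $E(G)$, the graph edges crossing every tree cut simultaneously. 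The ``$\ge c_e$'' direction is the bottleneck: naively furnishing $n-1$ independent flow witnesses would cost $\Theta(mn)$.

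To beat this, I would invoke Theorem~\ref{Theorem:Algorithm}, which reduces the construction of $T$ to $O(m/d)$ max-flow instances whose flow values $F_i$ satisfy $\sum_i F_i \le 2m$, plus $\tO(md)$ of auxiliary deterministic work. The nondeterministic prover additionally guesses (i) the intermediate data structures that Theorem~\ref{Theorem:Algorithm} maintains between calls (so that the verifier only needs to check their internal consistency against the guessed $T$ in $\tO(m)$ time, rather than computing them from scratch) and (ii) primal--dual certificates for each of the $O(m/d)$ max-flow calls. Since in unit capacity an integral flow of value $F_i$ decomposes into $F_i$ edge-disjoint paths, the budget $\sum_i F_i \le 2m$ caps the aggregate primal-witness size; combined with $O(n)$-bit dual witnesses and a suitable choice of $d$, one obtains total nondeterministic verification time $m^{1+o(1)}$.

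The main technical obstacle is arranging the primal flow witnesses so that their aggregate representation is $m^{1+o(1)}$, rather than the naive $\tO(m^2/d)$ one gets by writing each flow out on all of $E(G)$. I expect this to require a compact path-packing representation that exploits the $\sum_i F_i \le 2m$ budget to bound the total edge-uses across all the max-flow calls, together with structural properties derived from Theorem~\ref{Theorem:Algorithm} that localize where each max-flow needs to be supported in $G$; a parallel obstacle is verifying that the guessed intermediate data structures are consistent with an honest execution, which has to be done by $\tO(m)$-time invariant checks rather than by re-running the algorithm.
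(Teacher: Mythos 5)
Your composition step (deterministic SETH reduction $+$ $m^{1+o(1)}$-time nondeterministic algorithm $\Rightarrow$ refute NSETH) matches the paper's use of the Carmosino et al.\ framework, modulo the paper's extra care in adapting the framework to total functions rather than decision problems (every graph has a cut-equivalent tree, so outputs are non-unique). The real content is the nondeterministic algorithm, and there your proposal diverges from the paper and, as you yourself flag, leaves the crucial bottleneck unresolved.

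Your plan is to piggy-back on Theorem~\ref{Theorem:Algorithm}: guess the intermediate state of that algorithm and primal--dual certificates for its $O(m/d)$ max-flow calls, hoping that $\sum_i F_i\le 2m$ caps the primal witness size. This does not get you to $\tO(m)$. First, each of those $O(m/d)$ calls runs on an auxiliary graph $G'$ that can have $\Theta(m)$ edges; merely materializing (or even just streaming) those auxiliary graphs to check each dual cut against $E(G')$ already costs $\Omega(m^2/d)$, and the dual side alone is $O(n)$ bits per call, i.e.\ $O(mn/d)$ in total, which is super-linear in the sparse regime $m=\Theta(n)$ for any $d$ small enough to keep the $\tO(md)$ partial-tree stage in budget. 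Second, nothing in Theorem~\ref{Theorem:Algorithm} localizes each max-flow instance to a small subgraph of $G$; the ``structural properties \ldots\ that localize where each max-flow needs to be supported'' that you expect to need do not follow from that theorem. So the verification time you can actually justify tracks the deterministic bound $m^{3/2+o(1)}$, not $m^{1+o(1)}$.

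The paper closes this gap with a different algorithm. It does not use the $d$-partial-tree/truncated-Gomory-Hu structure at all. Instead it guesses $\TG$ and verifies it via a recursive \emph{centroid decomposition} of $\TG$: at each step an entire super-node is expanded into a star around a centroid $c_j$, so the recursion depth is $O(\log n)$. Two ideas then make the total work $\tO(m)$. (1) All the cuts and flow lower bounds associated with one expansion step (one per neighbor of $c_j$ in its subtree) are verified \emph{simultaneously} in time linear in $|E(G'_j)|$: the cuts are nested/disjoint so one pass tallies all their values, and the flow lower bounds are certified at once by a guessed packing of edge-disjoint directed trees rooted at $c_j$ (the Bang-Jensen/Tree Packing Theorem, Lemma~\ref{lemma:packing}), not by $d$ separate primal witnesses. (2) The total size of all auxiliary graphs over all expansion steps is $\tO(m)$, which follows from the structural inequality (Claim~\ref{claim:technical}) $\sum_{uv\in E(G)} c_G(u,v)\cdot \dist_{\T}(u,v) \le 2\sum_{uv\in E(G)} c_G(u,v)$ together with the disjointness of same-depth subtrees in the centroid decomposition. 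Both of these are missing from your proposal, and without them (or an equivalent), the aggregate certificate/auxiliary-graph size you are trying to bound simply does not come down to $m^{1+o(1)}$.
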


Our result (and this framework for non-reducibility) does not address the possibility of proving a SETH based lower bound with a randomized fine-grained reduction. 
This is because NSETH does not remain plausible when faced against randomization (see \cite{carmosino2016nondeterministic,Wil16_MASETH}).
That said, we are not aware of any examples where this barrier has been successfully bypassed with randomization.

\paragraph{Roadmap.} Our main algorithm is described in the Section~\ref{GHU_Alg}. The nondeterministic algorithm and non-reducibility result are presented in Section~\ref{sec:nondet}. We then present our lower bounds in Section~\ref{sec:lower}. The last section discusses open questions.

\section{Algorithm for a Cut-Equivalent Tree}
\label{GHU_Alg}

The basic strategy in our algorithm for unit edge capacities
is to handle separately nodes whose connectivity (to other nodes)
is high from those whose connectivity is low.
The motivation comes from the simple observation that the degree of a node is an upper bound on the maximum flow from this node to any other node in the graph.
Specifically, our algorithm has two stages.
The first stage uses one method (of partial trees~\cite{HKP07,BHKP07}), 
to compute the parts of the tree that correspond to small connectivities,
and the second stage uses another method
(the classical Gomory-Hu algorithm~\cite{GH61})
to complete it to a cut-equivalent tree (see Figure~\ref{Figs:GH_trees_3}). 
Let us briefly review these two methods.

\paragraph{The Gomory-Hu algorithm.}
This algorithm constructs a cut-equivalent tree $\T$ in iterations. 
Initially, $\T$ is a single node associated with $V$ (the node set of $G$), 
and the execution maintains the invariant that $\T$ is a tree; 
each tree node $i$ is a \emph{super-node},
which means that it is associated with a subset $V_i\subseteq V$; 
and these super-nodes form a partition $V=V_1 \sqcup\cdots\sqcup V_l$.
At each iteration, the algorithm picks arbitrarily two graph nodes $s,t$ 
that lie in the same tree super-node $i$, i.e., $s,t\in V_i$.
The algorithm then constructs from $G$ an auxiliary graph $G'$
by merging nodes that lie in the same connected component of  $\T\setminus\set{i}$
and invokes a \MF algorithm to compute in this $G'$ a minimum $st$-cut, denoted $C'$.
(For example, if the current tree is a path on super-nodes $1,\ldots,l$, 
then $G'$ is obtained from $G$ by merging $V_1\cup\cdots\cup V_{i-1}$
into one node and $V_{i+1}\cup\cdots\cup V_l$ into another node.)
The submodularity of cuts ensures that this cut is also 
a minimum $st$-cut in the original graph $G$,
and it clearly induces a partition $V_i=S\sqcup T$ with $s\in S$ and $t\in T$. The algorithm then modifies $\T$ by splitting super-node $i$
into two super-nodes, one associated with $S$ and one with $T$,
that are connected by an edge whose weight is the value of the cut $C'$,
and further connecting each neighbor of $i$ in $\T$ 
to either $S$ or $T$ (viewed as super-nodes),
depending on its side in the minimum $st$-cut $C'$
(more precisely, neighbor $j$ is connected to the side containing $V_j$).

\begin{figure}[!ht]
       \includegraphics[width=1.0\textwidth,left]{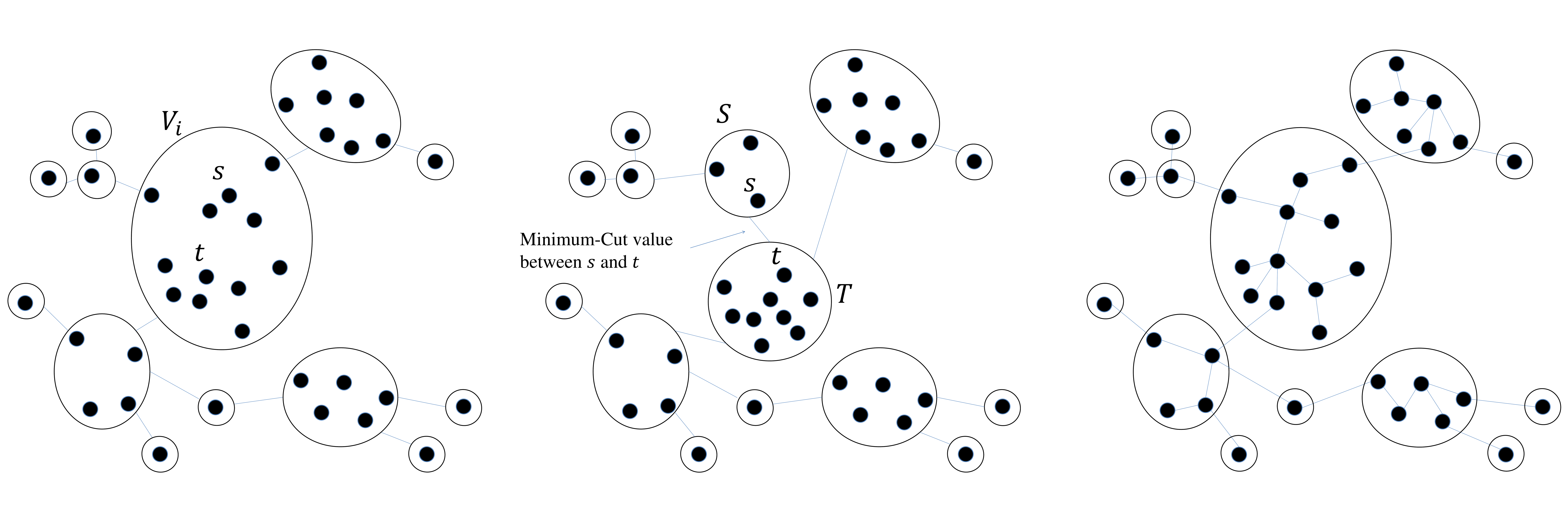}
   \caption[-]{
   An illustration of the construction of $\T$.  Left: $\T$ right before the partition of the super-node $V_i$. Middle: after the partitioning of $V_i$ Right: $\T$ as it unfolds after the Gomory-Hu algorithm finishes.
   }
   \label{Figs:GH_trees_3}
\vspace{.1in}\hrule
\end{figure}

The algorithm performs these iterations until all super-nodes are singletons,
and then $\T$ is a weighted tree with effectively the same node set as $G$.
It can be shown \cite{GH61} that for every $s,t\in V$,
the minimum $st$-cut in $\T$, viewed as a bipartition of $V$,
is also a minimum $st$-cut in $G$, and of the same cut value.
We stress that this property holds regardless of the choice made at each step
of two nodes $s\neq t\in V_i$.

\paragraph{Partial Tree.}
A $k$-partial tree, formally defined below, can also be thought of as
the result of contracting all edges of weight greater than $k$
in a cut-equivalent tree of $G$. 
Such a tree can obviously be constructed using the Gomory-Hu algorithm,
but as stated below (in Lemma~\ref{Lemma:Partial}), 
faster algorithms were designed in~\cite{HKP07,BHKP07},
see also~\cite[Theorem $3$]{Panigrahi16}.
We show below (in Lemma~\ref{lem:truncatedGH}) that such a tree
can be obtained also by a truncated execution of the Gomory-Hu algorithm,
and finally we use this simple but crucial fact to prove our main theorem.

\begin{definition}[$k$-Partial Tree~\cite{HKP07}] 
  A \emph{$k$-partial tree} of a graph $G=(V,E)$ is a tree on $l\leq \card{V}$ super-nodes constituting a partition $V=V_1 \sqcup\cdots\sqcup V_l$, with the following property: 
  For every two nodes $s,t\in V$ whose minimum-cut value in $G$ is at most $k$, let $S,T$ be the super-nodes for which $s\in S$ and $t\in T$,
  then the minimum $ST$-cut in the tree defines a bipartition of $V$
  which is a minimum $st$-cut in $G$ and has the same value.
\end{definition}

\begin{lemma}[\cite{BHKP07}]
\label{Lemma:Partial}
There is an algorithm that given an undirected graph with $n$ nodes and $m$ edges with unit edge capacities and an integer $k\in [n]$, constructs a $k$-partial tree in time $\tO(mk)$.
\end{lemma}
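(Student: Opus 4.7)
The plan is to combine a fast bounded-value max-flow on unit-capacity graphs with a tree-packing scheme that controls the number of max-flow calls. The argument proceeds in three stages.

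First, I would establish the single-pair cost. On a unit-capacity graph, Ford-Fulkerson with BFS-based augmenting paths computes a max $st$-flow of value $F$ in $O(mF)$ time, since each augmenting path can be found in $O(m)$ and there are exactly $F$ augmentations. In particular, a \emph{truncated} max-flow that either reports the exact min-cut value or declares that it exceeds $k$ runs in $O(mk)$ time. This already yields a meaningful but too-slow algorithm: run the classical Gomory-Hu construction verbatim, replacing each exact max-flow call by this truncated version. If the cut has value at most $k$, split the super-node and record the corresponding tree edge; otherwise, leave the super-node intact, since such an edge would be contracted anyway when passing from the full cut-equivalent tree to the $k$-partial tree. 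Correctness follows from the standard Gomory-Hu correctness, which guarantees that every minimum $st$-cut of value at most $k$ is represented exactly. However, this naive scheme performs up to $n-1$ truncated max-flow calls at $O(mk)$ each, for a total of $O(mnk)$, which exceeds the claimed bound.

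To reach $\tilde{O}(mk)$, I would reduce the number of expensive max-flow calls using tree packings. I would first apply Gabow's algorithm to compute, in $\tilde{O}(mk)$ time, a packing of $\Omega(k)$ edge-disjoint spanning trees; components whose local edge-connectivity falls below $k$ can be detected and handled separately at lower cost. The key structural fact, due to Nash-Williams and exploited in Karger's randomized min-cut algorithm and in the Hariharan-Kavitha-Panigrahi framework, is that every cut of value at most $k$ is \emph{$2$-respected} by at least one tree in the packing, meaning the cut crosses that tree in at most two edges. Therefore every edge of the $k$-partial tree we need to discover corresponds to a $1$- or $2$-respecting cut in some packed tree, giving only $\tilde{O}(kn)$ candidate splits to examine rather than all unordered pairs.

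The main technical work, which I expect to be the bottleneck, is the amortized accounting: interleaving the tree-packing candidates with the Gomory-Hu iterations so that the total cost of all truncated max-flow and augmenting-path computations telescopes to $\tilde{O}(mk)$ rather than $\tilde{O}(mnk)$. The hard part is exploiting sharing across candidates within a common packed tree and controlling the interaction with Gomory-Hu's contraction-and-split step via the submodularity (``uncrossing'') of cuts, so that each augmenting path contributes towards resolving many super-node splits at once. Once this amortization is in place, the overall correctness is immediate from Gomory-Hu combined with the $2$-respecting property, and the total runtime matches the stated $\tilde{O}(mk)$ bound.
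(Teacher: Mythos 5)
The paper does not prove this lemma; it cites it verbatim from Bhalgat--Hariharan--Kavitha--Panigrahi, so the comparison has to be against the soundness of your sketch on its own terms, and there are two genuine problems.

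First, you invoke the wrong packing theorem. Your argument leans on the Nash--Williams/Tutte spanning-tree packing together with Karger's $2$-respecting property, which is a tool for finding a \emph{global} minimum cut of a graph: among $\Omega(k)$ disjoint spanning trees, some tree is crossed at most twice by a fixed small cut. A $k$-partial Gomory--Hu tree instead demands, for every pair $s,t$ with $\lambda(s,t)\le k$, the exact minimum $s$-$t$ cut; the ``$2$-respecting'' property gives you candidates for one small cut per tree, not the full family of nested $s$-$t$ cuts that the Gomory--Hu structure needs. The tool that BHKP07 (and the earlier Hariharan--Kavitha--Panigrahi work, as well as Theorem~\ref{theorem:BHKP} in this very paper) actually uses is the Edmonds/Bang-Jensen--Frank--Jackson arborescence-packing theorem for Eulerian digraphs: bidirect and subdivide the graph, fix a root $r$, and pack arborescences so that each node $v$ appears in exactly $\min\{\lambda(r,v),k\}$ of them. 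Counting appearances certifies all single-source connectivities at once, and Gabow's algorithm computes such a packing in $\tilde{O}(mk)$ time. This is a qualitatively different mechanism from your $2$-respecting candidate enumeration, and the latter does not obviously recover the required per-pair cuts.

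Second, even granting your framework, you explicitly defer the central point. You write that the ``main technical work \ldots is the amortized accounting'' needed to bring $\tilde{O}(mnk)$ down to $\tilde{O}(mk)$, and then assert that ``once this amortization is in place \ldots the total runtime matches the stated bound.'' That amortization is precisely the content of the lemma: without specifying how the single-source arborescence packing is combined with a recursive Gomory--Hu-style split so that the sizes of the auxiliary subproblems telescope, there is no bound. As it stands the proposal names plausible ingredients but leaves the load-bearing step unproved, and the one concrete structural lemma it does name is not the one the cited algorithm relies on.
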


\begin{lemma} \label{lem:truncatedGH}
  Given a $k$-partial tree $T_{low}$ of a graph $G=(V,E)$, there is a truncated execution of the Gomory-Hu algorithm that produces $T_{low}$ (i.e., its auxiliary tree $\T$ becomes $T_{low}$).
\end{lemma}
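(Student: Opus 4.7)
The plan is to exhibit explicitly a sequence of $l-1$ pair selections and min-cut choices, where $l$ is the number of super-nodes of $T_{low}$, such that the Gomory--Hu algorithm produces $T_{low}$ after exactly these iterations, at which point the execution halts (this is the ``truncation''). Throughout I maintain the invariant that each super-node of the current auxiliary tree $\T$ equals the union of those super-nodes of $T_{low}$ that lie in some connected subtree of $T_{low}$, and moreover every tree edge of $\T$ coincides (as a partition of $V$ and as a weight) with the corresponding edge of $T_{low}$. The invariant holds trivially at initialization, when $\T$ is a single super-node equal to $V$ and corresponds to the full $T_{low}$.

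For the inductive step, if $\T \ne T_{low}$ then some super-node $V_i$ of $\T$ contains at least two super-nodes of $T_{low}$. I pick such a $V_i$, let $e$ be a minimum-weight edge of $T_{low}$ lying in $V_i$'s corresponding subtree, and let $(S^*,T^*)$ be the bipartition of $V$ obtained by removing $e$ from $T_{low}$. This induces a bipartition $(P,Q) := (S^* \cap V_i,\, T^* \cap V_i)$ of $V_i$. I then pick any $s \in P$ and $t \in Q$ (which are legally in the same super-node $V_i$ of $\T$) and run the max-flow call on the auxiliary graph $G'$. I will argue that $(S^*,T^*)$ is a minimum $st$-cut in $G'$, so the max-flow subroutine can return it; the resulting split of $V_i$ into $P$ and $Q$, connected by an edge of weight equal to that of $e$, together with the reconnection of each neighbor of $V_i$ in $\T$ to the appropriate side, is exactly the materialization of the edge $e$ of $T_{low}$ and preserves the invariant.

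The crux is to justify that $(S^*,T^*)$ is a minimum $st$-cut in $G'$. The path in $T_{low}$ between the super-node containing $s$ and the super-node containing $t$ stays inside $V_i$'s corresponding subtree (by the invariant) and must traverse $e$; since $e$ was chosen to minimize weight within this subtree, $e$ is a minimum-weight edge on that path. Because $(S^*,T^*)$ is an $st$-cut in $G$ whose value equals the weight of $e$ and is therefore at most $k$, we have $\mathrm{mincut}_G(s,t) \le k$, and the $k$-partial tree property then forces $\mathrm{mincut}_G(s,t)$ to equal this weight with $(S^*,T^*)$ a minimum $st$-cut in $G$. Moreover, each contracted meta-node of $G'$ corresponds to a connected subtree of $T_{low}$ (again by the invariant) that lies entirely on one side of $e$, so $(S^*,T^*)$ restricts to a valid cut in $G'$ whose value in $G'$ equals its value in $G$, matching the minimum $st$-cut value in $G'$ by the standard Gomory--Hu submodularity argument. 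After $l-1$ such iterations every super-node of $\T$ corresponds to a single super-node of $T_{low}$, hence $\T = T_{low}$. The main obstacle I anticipate is precisely this realizability of $(S^*,T^*)$ as the cut returned by the subroutine: it rests on the freedom to select a favorable minimum $st$-cut among possibly many, which is a standard liberty when asserting the existence of a specific algorithmic trace rather than analyzing an arbitrary one.
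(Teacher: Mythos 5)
Your proof is correct and takes essentially the same approach as the paper: steer the Gomory--Hu execution so that each iteration realizes one edge of $T_{low}$, and maintain the invariant that the super-nodes of $T_{low}$ refine those of the intermediate tree $\T$. The paper picks $s,t$ first (any pair in a common super-node of $\T$ but in distinct super-nodes of $T_{low}$) and then splits along the $T_{low}$-induced minimum $st$-cut, whereas you first fix a minimum-weight edge $e$ of the relevant subtree and then pick $s,t$ across it; this is a cosmetic reordering of the same argument, and the realizability justification (the cut respects the contracted meta-nodes and is a minimum $st$-cut in $G'$) is the same.
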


\begin{proof}
  Consider an execution of the Gomory-Hu algorithm with the following choices.
  At each iteration, pick any two nodes $s,t\in V$
  that lie in the same super-node $i$ of the current tree $\T$
  (hence they are feasible choice in a Gomory-Hu execution)
  \emph{but furthermore} lie in different super-nodes of $T_{low}$,
  as long as such $s,t$ exist. 
  Then split super-node $i$ of $\T$ using the minimum $st$-cut induced by $T_{low}$ (rather than an arbitrary minimum $st$-cut).
  As this cut corresponds to an edge in $T_{low}$,
  it cannot split any super-node of $T_{low}$,
  which implies, by an inductive argument,
  that the super-nodes of $T_{low}$ are subsets of the super-nodes of $\T$,
  and thus our chosen cut is a feasible choice for a Gomory-Hu execution. 
  Notice also that a pair $s,t$ as required above can be chosen
  as long as $\T$ is not equal to $T_{low}$,
  hence the Gomory-Hu execution continues until $\T$ becomes exactly $T_{low}$.
\end{proof}

\medskip
We are now ready to prove our main theorem.

\begin{proof}[Proof of Theorem~\ref{Theorem:Algorithm}]
Let $G=(V,E)$ be an input undirected graph with unit edge capacities,
and denote by $V_{low}$ all the nodes in $G$ whose degrees are at most
the chosen parameter $d\in [n]$,
and by $V_{high}=V\setminus V_{low}$ the nodes whose degrees are greater than $d$.

First use Lemma~\ref{Lemma:Partial} to construct a $d$-partial tree $T_{low}$,
and treat it as the auxiliary tree computed by a truncated execution
of the Gomory-Hu algorithm.
Then continue a Gomory-Hu execution (using this tree)
to complete the construction of a cut-equivalent tree.
Note that every node in $V_{low}$ is in a singleton super-node of $T_{low}$, since its minimum cut value to any other node is at most $d$; thus a super-node $V_i$ in $T_{low}$ has more than one node if and only if it contains only nodes in $V_{high}$.
Moreover, by the properties of $T_{low}$, two nodes have minimum-cut value greater than $d$ if and only if they are in the same super-node $V_i$.
Since by Lemma~\ref{lem:truncatedGH} there exists a truncated Gomory-Hu execution that produces $T_{low}$, a Gomory-Hu execution starting with $T_{low}$ as the auxiliary tree will result in a cut-equivalent tree and the correctness follows.
The running time bound follows as the first step of constructing $T_{low}$ takes $\tO( m d )$ time,
and the second step of the Gomory-Hu execution
takes $\card{V_{high}}$ invocations of \MF, that is running time $\sum_{i=1}^{m/d}T(m,n,F_i)$.
Since every invocation of maximum $st$-flow with value $F_i$ in our algorithm determines a unique edge with capacity $F_i$ in the final cut-equivalent tree, and the sum of the capacities over all the edges of the cut-equivalent tree is at most $2m$ (see Claim~\ref{claim:technical})
%
it holds for the invocations of \MF that $\sum_{i=1}^{m/d}T(m,n,F_i)\leq 2m$.
Thus, the proof of Theorem~\ref{Theorem:Algorithm} is concluded.

\end{proof}

We use the $T(m,n,F)=O(m^{3/4}n^{1/4}F^{1/2})$ time algorithm by~\cite{SidfordT18} to optimize our running time.
By the concavity of $F^{1/2}$, the maximum of $\sum_{i=1}^{m/d}T(m,n,F_i)$ is where always $F_i=d$.
By setting $d=\sqrt{m}n^{1/6}$ we get
$\sum_{i=1}^{\sqrt{m}/n^{1/6}} m^{3/4}n^{1/4}m^{1/4}n^{1/12}=
\sum_{i=1}^{\sqrt{m}/n^{1/6}} m n^{1/3}=
m^{3/2}n^{1/6}
$
which is faster than the currently known $\tO(mn)$ algorithm \cite{BHKP07} whenever $m\in [n,n^{5/3}]$.
Finally, relying on a hypothetical $m^{1+o(1)}$-time algorithm for \MF,
we could set $d=\sqrt{m}$ to get a total running time of $m^{1+o(1)}\cdot m/\sqrt{m}+\tO(m\cdot \sqrt{m})\leq m^{3/2+o(1)}$, as claimed. 

%
%
%

\section{Quasi-Linear Nondeterministic Algorithm for Cut-Equivalent Tree}\label{sec:nondet}

As no conditional lower bounds are known
for the problem of constructing a cut-equivalent tree, 
one potentially promising approach is to design a reduction from SAT
to prove that running time $n^{1+\delta-o(1)}$, for a fixed $\delta>0$,
is not possible assuming SETH. 
However, in this section we show that the existence of such a reduction (at least in the case of unit edge-capacities) would refute NSETH. 
This proves our Theorem~\ref{thm3}.

Our main technical result in this section (Theorem~\ref{nondet:main})
is a fast \emph{nondeterministic} algorithm 
for constructing a cut-equivalent tree (the meaning of this notion will be formalized shortly).
We then reach the conclusion about NSETH
by following an argument first made in~\cite{carmosino2016nondeterministic},
however we have to rewrite their argument
(rather than use their definitions and results directly), 
in order to adapt it from decision problems or functions (where each input has exactly one output) to total functions, 
since every graph has at least one cut-equivalent tree (see Section~\ref{subsec:reduction}).

Generally speaking, a search problem $P$ is a binary relation,
and we say that $S$ is a solution to instance $x$ iff $(x,S)\in P$. 
Let $\SOL(x)=\set{S: (x,S)\in P}$ denote the set of solutions for instance $x$. 
We say that $P$ is a \emph{total function}%
\footnote{We use this name for consistency with previous literature,
  although it is really a relation rather than a function.
}
if every instance $x$ has at least one solution, i.e., $\SOL(x)\neq \emptyset$.
Let $\bot$ be the ``don't know'' symbol and assume that $\bot \notin \SOL(x)$ for all $x$.
For example, in our problem of constructing a cut-equivalent tree,
$x$ is a graph and $\SOL(x)$ is the set of all cut-equivalent trees for $x$. 

\begin{definition}[Nondeterministic complexity of a total function]
We say that a total function $P$ has nondeterministic time complexity $T(n)$ if there is a deterministic Turing Machine $M$ such that for every instance $x$ of $P$ with size $\card{x}=n$:
\begin{enumerate} 
\renewcommand{\theenumi}{\alph{enumi}}
\item \label{it:time}
  $\forall g, \DTIME(M(x,g))\leq T(n)$,
  i.e., the time complexity of $M$ is bounded by $T(n)$; 
\item \label{it:a}
  $\exists g, M(x,g)\in \SOL(x)$, i.e., at least one guess leads $M$ to output a solution;
\item \label{it:b}
  $\forall g, M(x,g) \in \{\bot\}\cup \SOL(X)$, i.e., every guess leads $M$ to output either a solution or ``don't know''.
\end{enumerate}
\end{definition}


We can now state the main technical result of this section. 
We prove it in Section~\ref{sec:nondet_main},
and then use it in Section~\ref{subsec:reduction} to prove Theorem~\ref{thm3}.

\begin{theorem}\label{nondet:main}
The nondeterministic complexity of constructing a cut-equivalent tree
  for an input graph with unit edge-capacities is $\tO(m)$,
  where $m$ is the number edges in the graph. 
\end{theorem}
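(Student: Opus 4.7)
The plan is to turn the deterministic algorithm of Theorem~\ref{Theorem:Algorithm} into a nondeterministic one by having the prover guess the entire candidate cut-equivalent tree $T$ together with, for each tree edge $e_i$ of claimed weight $w_i$, a pair $s_i,t_i$ on opposite sides of $e_i$ and a max-flow witness of value $w_i$ in $G$ in the spirit of the nondeterministic \MF algorithm of Carmosino et al.~\cite{carmosino2016nondeterministic} (a feasible flow plus the matching bipartition). The verifier must then check, for every tree edge, that the weight $w_i$ equals both the capacity of the induced bipartition in $G$ (upper-bounding $\mathrm{mincut}(s_i,t_i)$) and the value of the guessed flow (lower-bounding $\mathrm{maxflow}(s_i,t_i)$); combined with the classical Gomory-Hu argument already invoked in the proof of Theorem~\ref{Theorem:Algorithm}, this certifies that $T$ is cut-equivalent.

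For the cut-value check I would proceed globally rather than per edge. After rooting $T$ and preprocessing it for LCA queries in $\tO(n)$ time, each graph edge $(u,v)\in E$ is ``charged'' by incrementing counters at $u$ and $v$ and decrementing twice at $\mathrm{LCA}(u,v)$. A single subtree-sum pass then produces $|E_G(S_i,\bar S_i)|$ for every tree edge simultaneously, in $\tO(m)$ time total, which can be compared against the guessed weights $w_i$.

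The central quantitative handle for the flow side is Claim~\ref{claim:technical}, namely $\sum_i w_i\leq 2m$. Thus it would suffice to verify each flow certificate in time $\tO(w_i)$ amortized, so that the flow verifications collectively cost $\tO(\sum_i w_i)=\tO(m)$. This is where I expect the main obstacle: the textbook flow-plus-cut nondeterministic witness of Carmosino et al.\ costs $\tO(m)$ per call (reading a flow on all edges and scanning the graph for the cut), which over $n-1$ calls sums to an unacceptable $\tO(mn)$. To overcome this I would exploit two special features of our setting. First, unit capacities: each flow is $0/1$ on directed edges, and the ``cut side'' of the witness is already given implicitly by $T\setminus e_i$ and has been certified globally by the LCA-based scheme above, so the flow witness only has to supply the saturated-edge support. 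Second, mimicking the Gomory-Hu phase of Theorem~\ref{Theorem:Algorithm}, the flows can be arranged to live in the successively contracted graphs $G'_i$, so that each individual check is done inside its own $G'_i$ rather than scanning all of $G$ each time.

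The hard part of the proof will therefore be a careful amortized accounting: showing that, summed over all Gomory-Hu iterations, the sizes of the supplied flow witnesses (together with the edges of the various $G'_i$ they touch) total only $\tO(m)$. The key inequalities driving this are $\sum_i F_i\leq 2m$ (Claim~\ref{claim:technical}) and the fact that each edge of $G$ participates in the ``active'' contracted graph of only a controlled number of iterations, so that with the right representation the bookkeeping cost does not blow up. Once this amortization is in place, combining it with the $\tO(m)$-time cut verification yields a nondeterministic algorithm of total running time $\tO(m)$, establishing Theorem~\ref{nondet:main}.
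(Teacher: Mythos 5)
Your proposal correctly identifies the two verification tasks (cut values and flow lower bounds) and even finds a clean $\tO(m)$ global cut-value check via LCA charging, which is a nice alternative to the paper's direct pass over auxiliary-graph edges. It also correctly pinpoints Claim~\ref{claim:technical} as a quantitative handle. However, there is a genuine gap exactly at the step you flag as ``the hard part,'' and Claim~\ref{claim:technical} alone is not enough to close it, for two independent reasons.

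First, the bound $\sum_i w_i \leq 2m$ does not bound the total \emph{size of the flow witnesses}. Even with unit capacities, a flow of value $w_i$ in a contracted graph $G'_i$ can saturate $\Theta(|E(G'_i)|)$ edges, not $\Theta(w_i)$ of them, so ``verify each flow in $\tO(w_i)$ amortized'' has no justification; you would instead need $\sum_i |E(G'_i)| = \tO(m)$. In a generic Gomory--Hu execution this sum can be $\Theta(mn)$ (e.g.\ when the tree is a long path and one contraction side stays small at every step), so your claim that ``each edge of $G$ participates in the active contracted graph of only a controlled number of iterations'' is false for an arbitrary pair-selection rule. The paper makes this true by a specific rule: it picks a \emph{centroid} of the current subtree at each step, giving a recursion of depth $O(\log n)$, and then proves (Lemma~\ref{lemma:size}, which does crucially use Claim~\ref{claim:technical}) that the auxiliary graphs at any fixed depth have total size $O(m)$. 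You never invoke anything like a centroid decomposition, and without it the amortization you need simply does not hold.

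Second, even granting $\sum_i |E(G'_i)| = \tO(m)$, guessing one flow per tree edge is still too expensive: the centroid $c_j$ has $d$ neighbors, so you would guess $d$ separate flows in the \emph{same} $G'_j$, each of which can touch $\Theta(|E(G'_j)|)$ edges, for a per-step cost of $\Theta(d\cdot|E(G'_j)|)$. The paper instead uses the Bang-Jensen/Cole tree-packing theorem (Theorem~\ref{theorem:BHKP}) to produce a \emph{single collection of edge-disjoint directed trees} rooted at $c_j$ that simultaneously certifies the max-flow lower bounds from $c_j$ to all $d$ neighbors, so the verification of one expansion step is genuinely $\tO(|E(G'_j)|)$. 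This simultaneous certification is a load-bearing ingredient your proposal does not have, and the naive per-edge flow witnesses cannot replace it. So while your high-level framing (guess the tree, amortize the checks, lean on Claim~\ref{claim:technical}) is in the right spirit, the proof as proposed does not go through without both the centroid-decomposition scheduling and the tree-packing witness.
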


This algorithm employs the Gomory-Hu algorithm in a very specific manner,
where the vertices chosen at each iteration are ``centroids'' (see below).
The same choice was previously used by Anari and Vazirani~\cite{AnariV18} 
in the context of parallel algorithms (for planar edge-capacitated graphs),
to achieve a logarithmic recursion depth, which is key for parallel time. 
However, since our goal is different (we want near-linear total time) we have to worry about additional issues, besides the depth of the recursion.
Many auxiliary graphs must be handled throughout the execution of the algorithm, and for each one we need to verify multiple minimum cuts. 
This is done by guessing cuts and flows, and the main challenge is to argue that the total size of all these objects (the auxiliary graphs, and the cuts and flows within them) is only $\tO(m)$. 
Towards overcoming this challenge, we show a basic structural result about cut-equivalent trees (see Claim~\ref{claim:technical} below) which may have other applications.
Prior to our work, it seemed unlikely that the Gomory-Hu approach could come close to near-linear time, even if \MF could be computed in linear time, since a \MF computation is executed many times in many auxiliary graphs. 
However, our analysis shows that the total size of all these auxiliary graphs can be near-linear (if the right vertices are chosen at each iteration),
giving hope that this approach may still achieve the desired upper bound.

\subsection{The Nondeterministic Algorithm}
\label{sec:nondet_main}

We now prove Theorem~\ref{nondet:main}.
Let $G=(V,E)$ be the input graph, and let $n=\card{V}$ and $m=\card{E}$.

\paragraph{Overview.}
At a high level, the nondeterministic algorithm 
first guesses nondeterministically a cut-equivalent tree $\TG$, 
and then verifies it by a (nondeterministic) process that resembles
an execution of the Gomory-Hu algorithm that produces $\TG$.
Similarly to the actual Gomory-Hu algorithm, 
our verification process is iterative and maintains a tree $\T$ of super-nodes, 
which means, as described in Section~\ref{GHU_Alg},
that every tree node $i$ is associated with $V_i\subseteq V$, 
and these super-nodes form a partition $V=V_1 \sqcup\cdots\sqcup V_l$. 
This tree $\T$ is initialized to have a single super-node corresponding to $V$
and then modified at each iteration, 
hence we shall call it the \emph{intermediate tree}. 
If all guesses work well, then eventually every super-node is a singleton
and the tree $\T$ corresponds to $\TG$.
Otherwise (some step in the verification fails), the algorithm outputs $\bot$. 

In a true Gomory-Hu execution, every iteration partitions some super-node
into exactly two super-nodes connected by an edge (say $V_i=S\sqcup T$).
In contrast, every iteration of our verification process partitions
some super-node into multiple super-nodes that form a star topology, 
whose center is a singleton
(say $V_i=\set{w}\sqcup V_{i,1}\sqcup\dots\sqcup V_{i,d}$,
where super-node $\set{w}$ has edges to all super-nodes $V_{i,1},\ldots,V_{i,d}$).
We call this an \emph{expansion step} (see Figure~\ref{Figs:Expansion}), 
and the node in the center of the star (i.e., $w$) the \emph{expanded node}.
These expansion steps will be determined from the guess $\TG$. 
For example, in the extreme case that $\TG$ itself is a star,
our verification process will take only one expansion step 
instead of $\card{V}-1$ Gomory-Hu steps.

\begin{figure}[!ht]
       \includegraphics[width=1.0\textwidth,left]{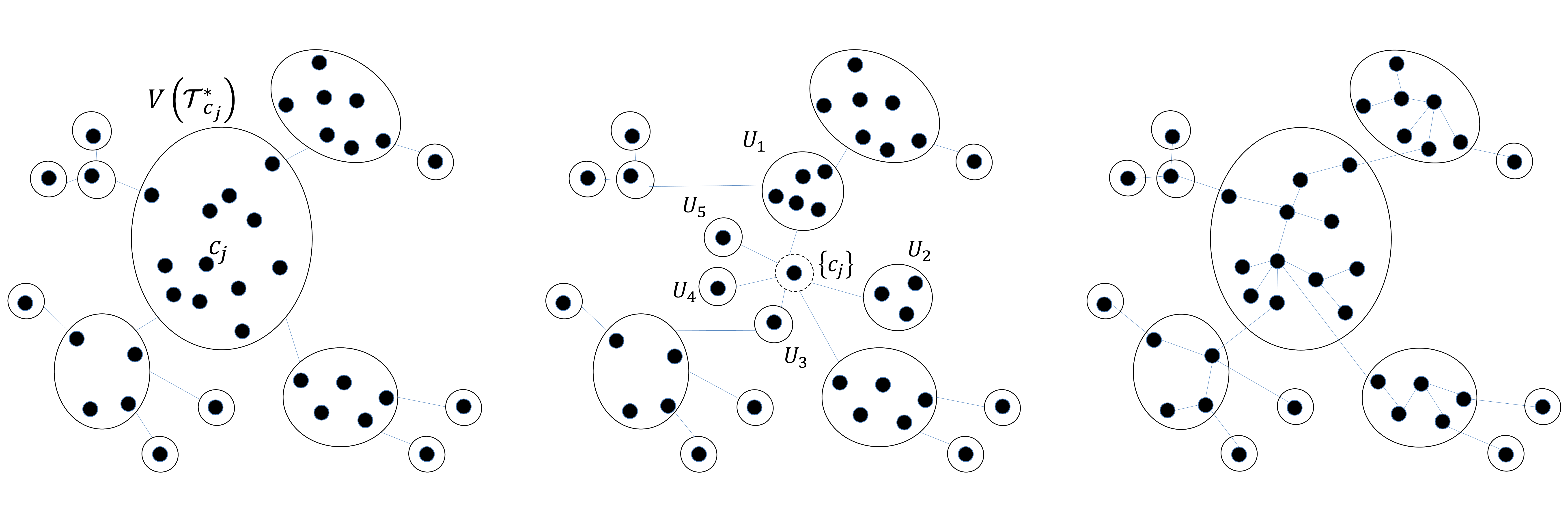}
   \caption[-]{
   An illustration of the verification of a guessed tree $\TG$. Left: the intermediate tree $\T$ right before an expansion step of the node $c_j$ in the super-node $V(\TG_{c_j})$. Middle: after the expansion step (of $c_j$, in the dashed circle) where $U_1,...,U_4$ are $c_j$'s neighbors in $\T^{(j+1)}$ such that $\bigcup_{i=1}^4 U_i\cup \set{c_j}=V(\TG_{c_j})$. Right: the guessed cut-equivalent tree $\TG$.
   }
   \label{Figs:Expansion}
\vspace{.1in}\hrule
\end{figure}

To prove that our algorithm is correct, we will show that every expansion step
corresponds to a valid sequence of steps in the Gomory-Hu algorithm.
As the latter relies on minimum-cut computations in some auxiliary graph $G'$,
also our verification will need minimum-cut computations, 
which can be easily performed in nondeterministic linear time. 
However, this will not achieve overall running time $\tO(m)$, 
because in some scenarios (e.g., in the above example where $\TG$ is a star), 
most of the $\card{V}-1$ minimum-cut computations are performed 
on an auxiliary graph $G'$ of size that is comparable to $G$, i.e., $\Omega(m)$. 
We overcome this obstacle using two ideas. 
First, we compute simultaneously all the minimum-cuts of the same expansion step
in nondeterministic time that is \emph{linear} in the size of $G'$. 
Second, we design a specific sequence of expansion steps 
such that the total size of \emph{all auxiliary graphs $G'$} is $\tO(m)$.

\paragraph{Detailed Algorithm.}

The algorithm first guesses nondeterministically an edge-capacitated tree $\TG$,
and then verifies, as explained below, that it is a cut-equivalent tree. 
Here, verification means that upon the failure of any step,
e.g., verifying some equality (say between the cut and flow values), 
the algorithm terminates with output $\bot$.
(By the same reasoning, we may assume that all guesses are proper,
e.g., a guessed tree is indeed a tree). 
The verification process starts by picking
a sequence of nodes $c_0,c_1,c_2,\ldots$ using the guess $\TG$, 
as follows. 
Recall that a \emph{centroid} of a tree is a node 
whose removal disconnects the tree into connected components (subtrees),
each containing at most half the nodes in the tree.
It is well-known that in every tree, 
a centroid exists and can be found in linear time. 
In a \emph{recursive centroid decomposition} of a tree, 
one finds a centroid of the given tree, removes it
and then repeats the process recursively in every connected component, 
until all remaining components are singletons (have size one). 
Our verification process computes this decomposition for the guess $\TG$, 
which takes time $O(n\log n)$.
For each recursion depth $i\ge 0$ (where clearly $i\le \log n$),
denote the set of centroids computed at depth $i$ by $D_i\subset V$. 
For example, $D_0$ contains exactly one centroid, of the entire $\TG$.
Now let $c_0,c_1,c_2,\ldots$ be the centroids in this decomposition
in order of increasing depth, i.e., starting with the one centroid $c_0\in D_0$,
followed by the centroids from $D_1$ (ordered arbitrarily), and so forth. 
Let $\TG_{c_j}$ be the subtree of $\TG$ in which the centroid $c_j$ was computed; for example $\TG_{c_0}=\TG$.
\begin{observation}\label{observation:disjoint}
For every two centroids from the same depth, namely, $c_j\neq c_{j'}\in D_i$, 
the corresponding subtrees $\TG_{c_j}$ and $\TG_{c_{j'}}$ are node disjoint. 
\end{observation}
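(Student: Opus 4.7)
The plan is to prove this by induction on the recursion depth $i$, exploiting the definition of a centroid decomposition: at each level, the subtrees in which the next set of centroids are computed are precisely the connected components obtained after removing a centroid from some subtree of the previous level.

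The base case $i=0$ is trivial: $D_0 = \{c_0\}$ contains only one centroid (that of $\TG_{c_0} = \TG$), so the disjointness condition is vacuous. For the inductive step, suppose the claim holds at depth $i$, and consider two distinct centroids $c_j, c_{j'} \in D_{i+1}$. By construction, each of the subtrees $\TG_{c_j}$ and $\TG_{c_{j'}}$ arises as a connected component of $\TG_{c} \setminus \{c\}$ for some centroid $c$ at depth $i$. I would then split into two cases: (a) both $\TG_{c_j}$ and $\TG_{c_{j'}}$ descend from the same depth-$i$ centroid $c$, in which case they are distinct connected components of $\TG_{c} \setminus \{c\}$ and are therefore node disjoint by definition of connected components; (b) they descend from two distinct depth-$i$ centroids $c \neq c'$, in which case $\TG_{c_j} \subseteq \TG_{c}$ and $\TG_{c_{j'}} \subseteq \TG_{c'}$, so disjointness of $\TG_{c_j}$ and $\TG_{c_{j'}}$ follows from the inductive hypothesis applied to $\TG_c$ and $\TG_{c'}$.

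I do not foresee any real obstacle here: the statement is essentially a restatement of the standard structural property of recursive centroid decompositions, and the entire argument is a short induction. The only thing worth being slightly careful about is making the correspondence between depth-$(i{+}1)$ subtrees and connected components of depth-$i$ subtrees-minus-centroid precise, which is immediate from the algorithm used to form the decomposition (find a centroid, remove it, recurse on each component). Hence a clean two-case induction completes the proof in a few lines.
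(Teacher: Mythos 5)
Your inductive argument is correct and complete. Note that the paper itself leaves this observation without proof, treating it as a standard structural property of recursive centroid decompositions; your two-case induction (same parent centroid gives distinct connected components; different parent centroids gives disjointness by the inductive hypothesis) is precisely the canonical justification the paper implicitly relies on.
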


The verification process now initializes a tree $\T$,
called the intermediate tree, 
to consist of a single super-node associated with $V$, 
and then performs on it expansion steps for nodes $c_0,c_1,c_2,\ldots$ 
(in this order) as explained below. 

We now explain how to perform an expansion step for node $c_j$.
Recall that $c_j$ is a centroid of the subtree $\TG_{c_j}$,
therefore it defines a partition 
$V(\TG_{c_j}) = \set{c_j}\sqcup U_1 \sqcup\dots\sqcup U_d$,
where $U_1,\ldots,U_d$ are the connected components after removing $c_j$. 
Notice that $d = \deg_{\TG_{c_j}}(c_j)\leq \deg_{\TG}(c_j)$,
and that each $U_k$, $k\in [d]$, contains exactly one node $u_k\in U_k$
that is a neighbor of $c_j$ in $\TG_{c_j}$. 
The expansion step replaces the super-node $V(\TG_{c_j})$ in $\T$ 
with $d+1$ super-nodes $\set{c_j},U_1,\dots,U_d$.
(We slightly abuse notation and use a subset of nodes like $V(\TG_{c_j})$
also to refer to the super-node in $\T$ associated with this subset.) 
These $d+1$ new super-nodes are connected by a star topology,
where the singleton $\set{c_j}$ at the center
and each newly-added edge $(\set{c_j}, U_k)$
is set to the same capacity as the edge $(c_j, u_k)$ in the guess $\TG$. 
In addition, every edge that was incident to super-node $V(\TG_{c_j})$,
say $(V(\TG_{c_j}),W)$, is modified to an edge $(U,W)$, 
where $U$ is one of the new super-nodes $\set{c_j},U_1,\dots,U_d$,
chosen according to the edge in $\TG$ that was used to set
a capacity for $(V(\TG_{c_j}),W)$. 
(We will explain how the algorithm verifies the correctness of these edge weights shortly.)

It is easy to verify that the modifications to $\T$ (due to expansion steps)
maintain the following property: 
Every super-node $U$ in $\T$ induces a subtree of $\TG$, 
i.e., the induced subgraph $\TG[U]$ is connected. 
Moreover, eventually every super-node will be a singleton, 
and the intermediate tree will exactly match the guess $\TG$. 
When we need disambiguation, we may use $\T^{(j)}$ 
to denote the tree's state before the expansion step for $c_j$. 
For example, $\T^{(0)}$ is the initial tree with a single super-node $V$.

Informally, the verification algorithm still has to check that the capacities
of the newly-added tree edges correctly represent minimum-cut values. 
To this end, the algorithm now constructs an auxiliary graph $G'_j$ 
just as in the Gomory-Hu algorithm (see Section~\ref{GHU_Alg}).
Specifically, $G'_j$ is constructed by taking $G$,
and then for each connected component of $\T^{(j)}\setminus \set{V(\TG_{c_j})}$
(i.e., after removing super-node $V(\TG_{c_j})$ from $\T^{(j)}$),
merging the nodes in (all the super-nodes in) this component into a single node.
Our analysis shows (in Claim~\ref{claim:GH}) 
that for all $s,t\in V(\TG_{c_j})$, 
every minimum $st$-cut in the auxiliary graph $G'_j$
is also a minimum $st$-cut in $G$. 
In addition, all the auxiliary graphs of a single depth $q$
can be constructed in quasi-linear time (Lemma~\ref{lemma:construction_time}).

Observe that each neighbor $u_k$ of $c_j$ in $\TG_{c_j}$
defines a $(c_j,u_k)$-cut in the auxiliary graph $G'_j$,
given by the two connected components of $\TG\setminus \set{(c_j,u_k)}$.
The algorithm evaluates for each $u_k$ the capacity of this cut in $G'_j$,
and verifies that it is equal to the capacity of the newly-added edge $(\set{c_j}, U_k)$ (set to be the same as of edge $(c_j, u_k)$ in $\TG$). 
In fact, all these cuts evaluations are performed not sequentially
but rather simultaneously for all $k\in [d]$, as follows. 
The key observation is that if we denote
each aforementioned $(c_j,u_k)$-cut by $(V(G'_j)\setminus C'_k, C'_k)$,
where $u_k\in C'_k$, 
then $\set{c_j},C'_1,\ldots,C'_d$ are disjoint subsets of $V(G'_j)$. 
One can clearly evaluate the capacity of all these $d$ cuts
in a single pass over the edges of $G'_j$, 
and since each edge contributes to at most two cuts (by the disjointness),
this entire pass takes only linear time $O(\card{E(G'_j)})$.

Next, to verify that each $(c_j,u_k)$-cut exhibited above,
namely, each $(V(G'_j)\setminus C'_k, C'_k)$,
is actually a minimum $(c_j,u_k)$-cut in $G'_j$, 
the algorithm finds a flow whose value is equal to the cut capacity.
In order to perform this task simultaneously for all $k\in[d]$, 
our verification algorithm employs a known result about disjoint trees,
as a witness for maximum-flow values in a graph with unit edge-capacities
(strictly speaking, this witness provides lower bounds on maximum-flow values). 
In the following theorem, a \emph{directed tree rooted at $r$} 
is a directed graph arising from an undirected tree
all of whose edges are then directed away from $r$.
This is equivalent to an arborescence
(having exactly one path from $r$ to every node other than $r$),
however we will not require that it spans all the graph nodes. In the following, $\MF_{G}(s,t)$ is the maximum $st$-flow value in a graph $G$.

\begin{lemma}\label{lemma:packing}
Given an undirected multigraph $H=(V_H,E_H)$, a root node $r\in V_H$, and a function $\lambda:V_H\rightarrow [\card{E_H}]$, it is possible to nondeterministically verify in time $\tO(\card{E_H})$ that
\begin{equation}\label{eq}
  \forall v\in V_H\setminus\set{r},
  \qquad
  \MF_{H}(r,v)\geq \lambda(v).
\end{equation}
Here, nondeterministic verification means that
if~\eqref{eq} holds then there exists a guess that leads to output ``yes'';
and if~\eqref{eq} does not hold then every guess leads to output ``no''.
\end{lemma}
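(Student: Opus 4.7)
The plan is to use a packing of directed trees (arborescences) rooted at $r$ as the nondeterministic witness. Let $H^d$ denote the directed multigraph obtained from $H$ by replacing each undirected edge $\set{u,w}$ with two anti-parallel directed edges, and set $K=\max_{v\neq r}\lambda(v)$. The algorithm will guess a collection $T_1,\ldots,T_K$ of directed trees rooted at $r$ in $H^d$, each specified by listing, for every non-root vertex of $T_i$, its unique parent in $T_i$. It will then verify three properties: (a) each $T_i$ is a valid directed tree rooted at $r$; (b) the $T_i$'s are pairwise edge-disjoint as subsets of the directed edge set of $H^d$; and (c) for every $v\in V_H\setminus\set{r}$, the number of indices $i$ with $v\in V(T_i)$ is at least $\lambda(v)$. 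Property~(a) is checked by a BFS/DFS from $r$ inside each $T_i$ that confirms reachability and acyclicity; (b) by a boolean array indexed by the directed edges of $H^d$ that rejects the second time any edge is seen; and (c) by a per-vertex counter updated while scanning the $T_i$'s and compared with $\lambda(v)$ at the end.

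The key time-complexity observation is that constraint (b) automatically forces $\sum_i \card{E(T_i)}\le \card{E(H^d)} = 2\card{E_H}$, so the whole witness has size $\tO(\card{E_H})$ and the verification runs in $\tO(\card{E_H})$ time, with the logarithmic overhead coming only from hashing or sorting the directed edges.

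For soundness, suppose some guess passes verification, fix $v\neq r$, and let $P_{i_1},\ldots,P_{i_{\lambda(v)}}$ be the directed paths from $r$ to $v$ inside those arborescences $T_i$ that contain $v$. These paths are pairwise edge-disjoint in $H^d$ by (b), so routing one unit of flow along each of them yields a flow of value $\lambda(v)$ in $H^d$. I will then argue that this flow descends to a feasible unit-capacity flow in $H$ of the same value: each undirected edge $\set{u,w}$ of $H$ is used at most once in each of the two directions, so a pair of opposing unit flows cancels, and the resulting net flow on every undirected edge has absolute value at most~$1$. By the max-flow/min-cut theorem this certifies $\MF_H(r,v)\ge \lambda(v)$, as required.

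For completeness I plan to invoke a classical arborescence-packing theorem (Edmonds, and its degree-constrained extension by Frank) applied to $H^d$: it guarantees the existence of $K$ edge-disjoint arborescences rooted at $r$ in $H^d$ in which each vertex $v$ is covered at least $\lambda(v)$ times, provided that for every non-empty $S\subseteq V_H\setminus\set{r}$ the directed in-degree $\rho_{H^d}(S)$ is at least $\max_{v\in S}\lambda(v)$. This condition holds automatically from our hypothesis, since $\rho_{H^d}(S)=\card{\delta_H(S)}\ge \MF_H(r,v)\ge \lambda(v)$ for any $v\in S$. The part I expect to be the main obstacle is pinning down the exact form of the packing theorem with the correct ``varying-coverage'' conditions and allowing the arborescences to be non-spanning, consistently with the paper's notion of a directed tree rooted at $r$; once that is settled, the soundness and running-time arguments above plug in directly.
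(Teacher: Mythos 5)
Your approach matches the paper's: both use a packing of edge-disjoint arborescences rooted at $r$ in a bidirected version of $H$ as the witness, verify edge-disjointness and per-vertex coverage counts, and bound the total witness size by $O(\card{E_H})$ via disjointness. Your soundness argument (routing unit flows along the $\lambda(v)$ edge-disjoint root-to-$v$ paths and cancelling opposing pairs) is a valid, slightly more explicit version of what the paper asserts in one line. The one cosmetic difference is that the paper first subdivides every edge and then bidirects, but your direct bidirection already yields an Eulerian digraph, so that step is not load-bearing here.

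The genuine gap is exactly the one you flag: the completeness theorem. Edmonds' theorem is about \emph{spanning} arborescences with a \emph{uniform} requirement $k$, and the condition you wrote, namely $\rho_{H^d}(S)\ge\max_{v\in S}\lambda(v)$ for all $S\not\ni r$, is not by itself a known sufficient condition for packing non-spanning arborescences with prescribed varying coverage in a general digraph. The clean statement you want is the Eulerian tree-packing theorem (Bang-Jensen--Frank--Jackson; the paper cites it via Cole and BHKP as Theorem~\ref{theorem:BHKP}): in an Eulerian digraph with root $r$, there exist $\max_{v}\MF(r,v)$ edge-disjoint arborescences rooted at $r$ in which each $v$ appears in exactly $\MF(r,v)$ of them. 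Since $H^d$ is Eulerian and $\MF_{H^d}(r,v)=\MF_H(r,v)$, this gives completeness immediately, with no cut condition to check by hand. Replacing your appeal to Edmonds/Frank with this theorem closes the proof; everything else in your plan goes through.
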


\begin{proof}
We use the following theorem known from~\cite[Theorem $2.7$]{Bang-Jensen95}, 
in its variation from~\cite{Cole03} as the Tree Packing Theorem. 

\begin{theorem}\label{theorem:BHKP}
Let $H_e$ be an Eulerian directed graph, and $r_e$ be a node in $H_e$.
Then there exist $\max_{v\neq r_e}\{\MF_{H_e}(r_e,v)\}$ edge-disjoint directed trees rooted at $r_e$, such that each node $v\in H_e$ appears in exactly $\MF_{H_e}(r_e,v)$ trees. 
\end{theorem}

Given the undirected multigraph $H$, first subdivide each edge into two edges with a new node in between them,
then orient each edge in both directions to obtain an Eulerian directed graph $H_e$. 
Observe that the minimum-cut values between pairs of original nodes in $H_e$ are the same as in $H$.
Now find all maximum-flow lower-bound values from $r$
in $H_e$ by guessing $\card{V_H}$ edge-disjoint trees and then counting occurrences of each node in those trees. By Theorem~\ref{theorem:BHKP}, these counts correspond to maximum-flow lower-bound values from $r$. And so if the guessed trees support the values given by $\lambda$, then answer ``yes", and otherwise answer ``no". Note that the conversion to directed Eulerian graph multiplied the amount of edges by $2$, and so the running time is still near linear.
%
%
%
\end{proof}

The verification algorithm then applies Lemma~\ref{lemma:packing}
to $G'_j$ with $c_j$ as the root, and verifies in time $\tO(\card{E(G'_j)})$
that the maximum-flow from $c_j$ to each $u_k$
is at least the capacity of the $(c_j,u_k)$-cut exhibited above 
(in turn verified to be equal to the capacity of edge $(c_j, u_k)$ in $\TG$).

\paragraph{Correctness.}

We begin by claiming that if the guessed tree $\TG$ is a correct cut-equivalent tree of $G$, then our algorithm outputs $\TG$; we discuss the complement case afterwards.
Since $\TG$ is a cut-equivalent tree, every verification step of an expansion will not fail and so the algorithm will not terminate and output $\TG$ at the end, as required.

Next, we show that if $\TG$ is not a cut equivalent tree, then our algorithm will not succeed.
This is proved mainly by the claim below, that an intermediate tree attained by expansion steps can be attained also by a sequence of Gomory-Hu steps.

\begin{claim}\label{claim:GH}
If there is a sequence of Gomory-Hu steps simulating  expansions attaining ${\T}^{(j)}$, and another expansion step is being done to attain ${\T}^{(j+1)}$, then there is a sequence of Gomory-Hu steps simulating this last step too.
\end{claim}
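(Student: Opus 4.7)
The plan is to simulate the single expansion step at $c_j$---which replaces super-node $V(\TG_{c_j})$ in $\T^{(j)}$ by the star consisting of $\set{c_j}, U_1,\ldots,U_d$---by a sequence of $d$ successive Gomory-Hu steps. In the $k$-th such step, I would pick the pair $s=c_j$ and $t=u_k$, which both still lie in the (shrinking) super-node containing $c_j$, and use as the chosen minimum $st$-cut the bipartition $(V(G'_j)\setminus C'_k,\,C'_k)$ exhibited by $\TG$, where $C'_k$ is the side containing $u_k$ after removing edge $(c_j,u_k)$ from $\TG$. After all $d$ steps, $V(\TG_{c_j})$ has been sliced into exactly $\set{c_j},U_1,\ldots,U_d$ in the required star topology, with the tree-edge capacities inherited from $\TG$, which matches $\T^{(j+1)}$.

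The key verification is that each of these $d$ simulated Gomory-Hu steps is legal, i.e., that the prescribed cut is indeed a minimum $(c_j,u_k)$-cut in the relevant Gomory-Hu auxiliary graph. The verification already performed by the algorithm provides, via Lemma~\ref{lemma:packing}, a flow from $c_j$ to $u_k$ inside $G'_j$ of value equal to the capacity of edge $(c_j,u_k)$ in $\TG$, and direct edge-counting in $G'_j$ shows that the exhibited cut $C'_k$ has exactly that same capacity; hence this cut is already minimum in $G'_j$ itself. The subtlety is that in the $k$-th simulated step the auxiliary graph is not $G'_j$ but rather the graph obtained from $G'_j$ by further contracting the super-nodes $U_1,\ldots,U_{k-1}$ that were created by the previous simulated steps.

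The main obstacle is therefore to argue that the exhibited cut remains minimum after these additional contractions. This follows from two observations: (i)~for every $i<k$ the set $U_i$ lies entirely on the $c_j$ side of the cut $(V(G'_j)\setminus C'_k,\,C'_k)$, since removing edge $(c_j,u_k)$ from $\TG$ leaves $U_i$ in the same component as $c_j$, so contracting $U_i$ to a single vertex preserves the cut's capacity; and (ii)~contracting any subset of nodes cannot decrease minimum-cut values between the remaining vertices, because any cut of the contracted graph lifts to a cut of $G'_j$ of the same capacity with the contracted set entirely on one side. Combining (i) and (ii), the minimum $(c_j,u_k)$-cut value in the contracted auxiliary graph is still equal to the capacity of $(c_j,u_k)$ in $\TG$ and coincides with the exhibited cut, so the choice is valid for a Gomory-Hu step. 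Concatenating these $d$ Gomory-Hu steps after the inductive sequence that produces $\T^{(j)}$ yields a Gomory-Hu sequence producing $\T^{(j+1)}$, which completes the simulation and proves the claim.
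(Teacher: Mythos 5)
Your proof follows the same strategy as the paper's: simulate the expansion at $c_j$ by $d$ successive Gomory--Hu steps, each choosing the pair $(c_j,u_k)$ and the cut induced by deleting edge $(c_j,u_k)$ from $\TG$. The paper simply asserts that this cut "is a minimum cut also in the corresponding auxiliary graph," whereas you spell out why, via the two contraction observations (the $U_i$, $i<k$, together with their attached merged components lie entirely on the $c_j$ side, and contraction cannot decrease min-cut values); this is a correct and welcome elaboration of a step the paper glosses over, and the only small imprecision is that the $k$-th auxiliary graph contracts each $U_i$ together with its attached outside super-nodes rather than $U_i$ alone, which does not affect your argument since those attached super-nodes also lie on the $c_j$ side.
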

\begin{proof}
Under our assumption there is a truncated execution of the Gomory-Hu algorithm that produces ${\T}^{(j)}$. We describe a sequence of Gomory-Hu algorithm's steps starting with ${\T}^{(j)}$ that produces ${\T}^{(j+1)}$.
Recall that $U_{1},...,U_{d}$ are $\set{c_j}$'s neighbors in $\T^{(j+1)}$ such that $\bigcup_{i=1}^d U_i\cup \set{c_j} = V(\TG_{c_j})$,
and $u_{1},...,u_{d}$ are the nodes by which the capacities of the edges $(\set{c_j},U_k)$, $k\in [d]$, were chosen.

The Gomory-Hu steps are as follows, where we denote by $\T$ the intermediate tree along the execution.
Starting with $\T=\T^{(j)}$, for $k=1,...,d$, the Gomory-Hu execution picks the pair $c_j,u_{k}$ from the super-node containing it in $\T$ as the pair $s,t$ in the Gomory-Hu algorithm description (see the description in Section~\ref{GHU_Alg}), and the given minimum-cut value between them is asserted. Then, for the partitioning of this super-node in $\T$, the execution picks the minimum-cut between $c_j,u_{k}$ as in $\TG$ (which is a minimum cut also in the corresponding auxiliary graph) and modifies the intermediate tree accordingly. Note that the last expansion step was assumed to be successful (i.e., verified correctly), thus all the cuts chosen for the partitioning are minimum-cuts.
%
%
%
\end{proof}

Now, assume for the contrary that $\TG$ is not a cut-equivalent tree of $G$ and our algorithm still produces it.
As a consequence of Claim~\ref{claim:GH}, there is a sequence of Gomory-Hu steps attaining $\TG$, contradicting the proof of correctness of the Gomory-Hu algorithm (which cannot produce $\TG$).
Thus, it is impossible that our algorithm finishes and produces $\TG$, and so in one of the minimum-cut verifications after an expansion step, the cut witness inspired from $\TG$ would not be correct, or there would not be a set of directed trees to testify that the corresponding cuts are minimal. This completes the proof of correctness.

\paragraph{Running Time.}
Observe that the running time of a single expansion step,
i.e., verifying its corresponding minimum cuts by evaluating cuts and flows,
is quasi-linear in the size of the auxiliary graph.
Thus, we only have to show that the total size of all the auxiliary graphs
(over all the expansions) is quasi-linear. 
The next lemma provides a bound for a single depth $q$.
As a corollary and since the depth of the entire decomposition is $O(\log n)$,
we get a bound of $\tO(m)$ on the total size of all auxiliary graphs over all depths.

\begin{lemma}\label{lemma:size}
Let $D_q=\{c_{j_1},\dots,c_{j_2}\}$ contain the centroids at depth $q$. 
Then the total size of $G'_{j_1},\dots,G'_{j_2}$ is at most $O(m)$.
\end{lemma}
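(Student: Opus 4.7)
The plan is to bound the total size $\sum_{c_j\in D_q}(|V(G'_j)|+|E(G'_j)|)$ by separating the node count from the edge count, and further splitting the edges of each $G'_j$ into those incident to $V(\TG_{c_j})$ and those between two merged nodes. Two structural facts will be central. First, the merged nodes of $G'_j$ are in bijection with the connected components of $\T^{(j)}\setminus\set{V(\TG_{c_j})}$, which in turn coincide with those of $\TG\setminus V(\TG_{c_j})$ (because $\T^{(j)}$ is built from $\TG$ by contracting connected subtrees, and the non-$V(\TG_{c_j})$ super-nodes are subtrees of $\TG$ that avoid $V(\TG_{c_j})$); since $V(\TG_{c_j})$ is itself a connected subtree of the tree $\TG$, these are in bijection with the boundary edges $\partial V(\TG_{c_j})$, i.e., the $\TG$-edges with exactly one endpoint in $V(\TG_{c_j})$. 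Second, by Observation~\ref{observation:disjoint} the subtrees $\{V(\TG_{c_j})\}_{c_j\in D_q}$ are pairwise node-disjoint, so each $G$-edge is incident to $V(\TG_{c_j})$ for at most two $c_j\in D_q$, and similarly each $\TG$-edge belongs to $\partial V(\TG_{c_j})$ for at most two such $c_j$.

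Using these two facts, the node count is immediate: $|V(G'_j)|=|V(\TG_{c_j})|+|\partial V(\TG_{c_j})|$, so summing over $D_q$ gives $\sum_{c_j\in D_q}|V(G'_j)|\le n+2|E(\TG)|=O(n)$.

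For the edges, I would classify each edge of $G'_j$ as either (a) incident to some node of $V(\TG_{c_j})$, or (b) lying between two distinct merged nodes. Type-(a) edges correspond to $G$-edges with at least one endpoint in $V(\TG_{c_j})$, and the disjointness above bounds the total type-(a) count across $D_q$ by $2m$. The main step---and where Claim~\ref{claim:technical} enters---handles type-(b). The crucial observation is that a type-(b) edge arises from a $G$-edge $(u,v)$ whose unique $\TG$-path $P_{uv}$ must enter and leave $V(\TG_{c_j})$, hence traverses at least \emph{two} distinct edges of $\partial V(\TG_{c_j})$; by the cut-equivalent tree property, $(u,v)$ is counted in the min-cut represented by each such boundary edge. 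Therefore the number of type-(b) edges in $G'_j$ is at most $\tfrac{1}{2}\sum_{e'\in \partial V(\TG_{c_j})}w_\TG(e')$. Summing over $c_j\in D_q$, charging each $\TG$-edge to at most two subtrees, and invoking Claim~\ref{claim:technical} (which asserts $\sum_{e'\in \TG}w_\TG(e')\le 2m$) yields a total type-(b) bound of $2m$. Combined with the type-(a) bound, this gives $\sum_{c_j\in D_q}|E(G'_j)|\le 4m$, which together with the node bound yields the claimed $O(m)$ total size. The hardest part of the argument is the double-counting step for type-(b): establishing the ``entry and exit via two distinct boundary edges'' property depends on the careful correspondence between components of $\T^{(j)}\setminus\set{V(\TG_{c_j})}$ and those of $\TG\setminus V(\TG_{c_j})$, and on applying the cut-equivalent tree property to convert a purely tree-theoretic statement about boundaries into the capacity sum controlled by Claim~\ref{claim:technical}.
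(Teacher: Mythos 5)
Your proof is correct and follows essentially the same approach as the paper: split each auxiliary graph's edges into those incident to $V(\TG_{c_j})$ (bounded globally by $2m$ via disjointness, Observation~\ref{observation:disjoint}) and cross-edges between merged super-nodes (bounded globally via Claim~\ref{claim:technical}). Your type-(b) accounting charges each cross-edge twice against the boundary tree-edges $\partial V(\TG_{c_j})$ and then invokes $\sum_{e}c_{\TG}(e)\le 2m$, while the paper instead counts, for each $G$-edge $uv$, the at most $\dist_{\TG}(u,v)-1$ depth-$q$ auxiliary graphs it can appear in as a cross-edge; the two bookkeepings are equivalent via the identity $\sum_{e}c_{\TG}(e)=\sum_{uv\in E(G)}\dist_{\TG}(u,v)$ established inside the proof of Claim~\ref{claim:technical}.
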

\begin{corollary}\label{cor:total}
The total size of all auxiliary graphs
(over all depths) is $\tO(m)$.
\end{corollary}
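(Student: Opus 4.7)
The plan is to relate each auxiliary graph $G'_{j_k}$ directly to $\TG$, bypassing the intricate intermediate trees, and then to use a charging argument that invokes Claim~\ref{claim:technical}. First I would verify, by induction on the expansion steps performed so far, that $\T^{(j_k)}$ is precisely $\TG$ with certain node-disjoint subtrees contracted into super-nodes; consequently the connected components of $\T^{(j_k)}\setminus\{V(\TG_{c_{j_k}})\}$, when expanded back to subsets of $V$, are exactly the connected components of $\TG\setminus V(\TG_{c_{j_k}})$. Thus $G'_{j_k}$ is obtained from $G$ by keeping each node of $V(\TG_{c_{j_k}})$ as a singleton and merging each connected component of $\TG\setminus V(\TG_{c_{j_k}})$ into one node.

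Given this clean description, the vertex bound is immediate. Let $b_k$ denote the number of $\TG$-edges between $V(\TG_{c_{j_k}})$ and its complement; since $\TG_{c_{j_k}}$ is a connected subtree of $\TG$, this equals the number of components of $\TG\setminus V(\TG_{c_{j_k}})$, so $|V(G'_{j_k})|=|V(\TG_{c_{j_k}})|+b_k$. By Observation~\ref{observation:disjoint} the depth-$q$ subtrees are node-disjoint, giving $\sum_k |V(\TG_{c_{j_k}})|\le n$; and since each edge of $\TG$ is a boundary edge of at most two depth-$q$ subtrees, $\sum_k b_k\le 2(n-1)$. Hence $\sum_k |V(G'_{j_k})|=O(n)$.

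For edges, the key observation is that $e=(u,v)\in E(G)$ survives in $G'_{j_k}$ if and only if the $uv$-path $P(e)$ in $\TG$ intersects $V(\TG_{c_{j_k}})$: otherwise both $u$ and $v$ lie outside $V(\TG_{c_{j_k}})$ and inside the same component of $\TG\setminus V(\TG_{c_{j_k}})$, so they collapse into one vertex. Since the depth-$q$ subtrees are pairwise node-disjoint, the number of them met by $P(e)$ is at most the number of nodes on $P(e)$, which is $|P(e)|+1$. Summing gives $\sum_k |E(G'_{j_k})|\le \sum_{e\in E(G)}(|P(e)|+1)$.

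Finally, I would bound $\sum_{e\in E(G)}|P(e)|$ by swapping the order of summation. For each $\TG$-edge $f$, the number of $G$-edges $e$ with $f\in P(e)$ equals the number of $G$-edges crossing the bipartition of $V$ induced by removing $f$ from $\TG$; by cut-equivalence this bipartition is a minimum $st$-cut in $G$ (where $s,t$ are the endpoints of $f$) whose value equals the capacity of $f$, and since $G$ has unit edge capacities this value is exactly the number of crossing $G$-edges. Hence $\sum_e |P(e)|$ equals the total capacity of all edges of $\TG$, which is at most $2m$ by Claim~\ref{claim:technical}. Combining yields $\sum_k |E(G'_{j_k})|\le 3m=O(m)$, completing the proof. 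The main obstacle is spotting the right charging scheme (charging surviving edges to path lengths in $\TG$, then invoking Claim~\ref{claim:technical}); the structural identification of $\T^{(j_k)}$ with a contraction of $\TG$ is an easy induction, and the rest is routine double counting.
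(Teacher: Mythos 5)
Your proof is correct and follows essentially the same route as the paper's: bound the total size of all auxiliary graphs at a single depth $q$ by $O(m)$, and then multiply by the $O(\log n)$ depths of the centroid decomposition. Both you and the paper arrive at a per-edge bound of $\dist_{\TG}(u,v)+1$ on the number of depth-$q$ auxiliary graphs containing $e=(u,v)$ (the paper writes this as $2 + (\dist_{\TG}(u,v)-1)$, case-splitting on whether an endpoint of $e$ lies inside $V(\TG_{c_j})$, while you unify the two cases by counting how many of the node-disjoint subtrees the tree path $P(e)$ can meet), and both then sum using Claim~\ref{claim:technical} together with Observation~\ref{observation:disjoint}. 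Two small remarks: (i) your re-derivation of $\sum_e |P(e)| = \sum_f c_{\TG}(f)$ is unnecessary, since $|P(e)|$ is exactly $\dist_{\TG}(u,v)$ and Claim~\ref{claim:technical} already gives $\sum_e \dist_{\TG}(u,v)\le 2m$ in the unit-capacity case; and (ii) the preliminary structural identification of $\T^{(j_k)}$ with a contraction of $\TG$, while correct and worth making explicit (the paper only asserts that every super-node induces a subtree of $\TG$), is something the paper leaves implicit.
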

\begin{proof}[proof of Lemma~\ref{lemma:size}]
%
%
%

We count for each edge $uv\in E(G)$ in how many auxiliary graphs it appears in depth $q$. This turns out to be at most $2+(\dist_{\T}(u,v)-1)$ where $\dist_{\T}(u,v)$ is the hop-distance, i.e., the minimum number of edges (ignoring weights or capacities) in a path 
between $u$ and $v$ in the tree $\T$.
The $2$ term is from edges $uv$ such that either $u$ or $v$ are in $V({\T_{c_j}})$ in the auxiliary graph $G'_j$. Clearly, every such edge is in at most two auxiliary graphs at this depth $q$, i.e., at most in both $G'_j$ and $G'_{j'}$ where $u\in V({\T_{c_j}})$ and $v\in V({\T_{c_{j'}}})$.
The $(\dist_{\T}(u,v)-1)$ term is a bound on appearances of edges $uv$ such that neither $u$ nor $v$ are in $V({\T_{c_j}})$ for $G'_j$, which is proved in the claim below.
While our graph has unit capacities, the next claim is for general capacities. In what follows, $c_{\T}(e)$ the capacity of the edge $e$ in $\T$.

\begin{claim}\label{claim:technical}
For every cut-equivalent tree $\T$ of a graph $G$ with edge capacities $c_G:E\rightarrow \mathbb{R}_+$,
$$
\sum_{uv\in E(G)} c_G(u,v)\cdot\dist_{\T}(u,v)\leq 2\sum_{uv\in E(G)} c_G(u,v).
$$
\end{claim}
\begin{proof}
  We first show that $\sum_{uv\in E_{G}} c_G(u,v)\cdot\dist_{\T}(u,v) = \sum_{e\in E_{\T}} c_{\T}(e)$,
  where $c_{\T}$ denotes edge capacity in $\T$.
  Observe that each $c_{\T}(e)$ is the value of a certain cut in $G$,
  hence we can evaluate the right-hand side differently,
  by summing over the graph edges $uv\in E(G)$
  and counting for each edge in how many such cuts it appears.
  Recalling that $\T$ is a cut-equivalent tree,
  the count for each graph edge $uv\in E(G)$ is exactly $\dist_{\T}(u,v)$ contributions of $c_G(u,v)$,   giving altogether the left-hand side.

  Second, we show $\sum_{uv\in E(\T)} c_{\T}(u,v)\leq 2\sum_{uv\in E(G)} c_G(u,v)$.
  To see this, observe that $c_{\T}(u,v)\leq \min\{\deg_{c_G}(u), \deg_{c_G}(v)\}$
  where $\deg_{c_G}(u)$ is the total capacity of edges incident to $u$.
  Now fix a root vertex in $\T$,
  and bound each tree edge by $c_{\T}(u,v)\leq \deg_{c_G}(v)$,
  where $v$ is the child of $u$ (i.e., farther from the root) in $\T$. 
  Summing this bound over all the tree edges, observing that the corresponding vertices $v$ are all distinct,
  and the proof follows.
\end{proof}

Recall that by Observation~\ref{observation:disjoint} the super-nodes $V(\T_{c_{j_1}}),\ldots,V(\T_{c_{j_2}})$ of the same depth $q$ are pairwise disjoint. Thus, an edge $uv$ appears in at most $dist_{\TG}(u,v)-1$ auxiliary graphs of depth $q$, 
which totals to $O(m)$ for all the edges in this depth according to the unit edge-capacity special case of the above Claim~\ref{claim:technical}. This concludes Lemma~\ref{lemma:size}.

\end{proof}

Next, we bound the time it takes to construct all the auxiliary graphs.
\begin{lemma}\label{lemma:construction_time}
The total time it takes to construct the auxiliary graphs for all the expansions in the centroid decomposition is $\tO(m)$.
\end{lemma}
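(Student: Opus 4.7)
My plan is to construct all auxiliary graphs depth by depth across the recursive centroid decomposition of $\TG$, spending quasi-linear time at each of the $O(\log n)$ depths. As one-time preprocessing I root $\TG$ at $c_0$, build an LCA data structure on $\TG$ that answers queries in $O(1)$, and fill in a two-dimensional table storing $\mathrm{anc}_q(v)\in D_q\cup\{\bot\}$ for every $v\in V$ and every depth $q$, namely the unique centroid $c\in D_q$ whose subtree $V(\TG_c)$ contains $v$, or $\bot$ if $v$ itself was already a centroid at some depth $<q$. Since the centroid decomposition has depth $O(\log n)$, this preprocessing costs $\tO(n)$ time and space.

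For each depth $q$, I initialize empty adjacency lists representing $G'_j$ for every $c_j\in D_q$, and then iterate once through the edges of $G$. For each edge $(u,v)\in E(G)$ I enumerate the vertices $u=w_0,w_1,\ldots,w_k=v$ of the $uv$-path in $\TG$ (using the LCA) and read the sequence $\mathrm{anc}_q(w_0),\ldots,\mathrm{anc}_q(w_k)$. Because the subtrees $\{V(\TG_c):c\in D_q\}$ are pairwise disjoint (Observation~\ref{observation:disjoint}) and each is connected in $\TG$, this sequence splits into maximal runs in which each non-null value $c_j$ appears in a single contiguous block $w_\ell,\ldots,w_r$; this block witnesses that the $uv$-path enters and leaves $V(\TG_{c_j})$ exactly once, so $(u,v)$ contributes exactly one image edge to $G'_j$ with endpoints $w_{\ell-1}$ and $w_{r+1}$, replaced by $u$ or $v$ exactly when $\ell=0$ or $r=k$ (i.e.\ when the corresponding endpoint of $(u,v)$ already lies inside $V(\TG_{c_j})$). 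This is precisely the merging rule from Section~\ref{sec:nondet_main}: $w_{\ell-1}$ and $w_{r+1}$ are the canonical boundary vertices identifying the two components of $\T^{(j)}\setminus V(\TG_{c_j})$ incident to $(u,v)$, and the construction is insensitive to the intra-depth ordering because the subtrees of $D_q$ are disjoint and expansions inside a sibling super-node cannot affect the components outside $V(\TG_{c_j})$ viewed as subsets of $V$.

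The running time bound is then immediate. At depth $q$, the work is dominated by $\sum_{(u,v)\in E(G)}(\dist_{\TG}(u,v)+1)$, which is $O(m)$ by the unit-capacity special case of Claim~\ref{claim:technical}; combined with $O(1)$ bookkeeping per path vertex this yields $\tO(m)$ per depth, and $\tO(m)$ in total over the $O(\log n)$ depths. The main subtlety I anticipated was reading off the boundary endpoints in amortized constant time during the single traversal, but this falls out for free from the single-contiguous-block structure of the ancestor sequence, so no data structure beyond the LCA oracle and the $\mathrm{anc}_q$ table is needed. (If the guessed $\TG$ is not a cut-equivalent tree then Claim~\ref{claim:technical} may fail to apply, but the algorithm simply caps its time budget at $\tO(m)$ and aborts with $\bot$, which is consistent with the nondeterministic-algorithm requirements.)
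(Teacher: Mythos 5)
Your proof is correct but takes a genuinely different route from the paper's. The paper builds the auxiliary graphs recursively top-down: each depth-$(q+1)$ graph $G'_{j,k}$ is derived from its already-built parent $G'_j$ by further merging, and the per-depth work is charged against the total auxiliary-graph size via Corollary~\ref{cor:total}. You instead rebuild the entire layer $\{G'_j : c_j \in D_q\}$ from scratch at each depth by a single scan over $E(G)$, walking each edge's $\TG$-path and reading the $\mathrm{anc}_q$ values to extract the contiguous block that determines which auxiliary graphs the edge survives in and which canonical boundary vertices it attaches to. Both arguments bottom out at the same combinatorial fact, $\sum_{uv\in E(G)}\dist_{\TG}(u,v)=O(m)$ from Claim~\ref{claim:technical}; the paper invokes it indirectly through Lemma~\ref{lemma:size}, you invoke it directly. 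Your version is more self-contained and makes two things explicit that the paper glosses over: the precise edge-image rule (the first-exit boundary vertex as the merged-node representative, justified by Observation~\ref{observation:disjoint} and the connectedness of each $V(\TG_c)$ in $\TG$), and the observation that when $\TG$ is a bad guess the distance bound need not hold, so the algorithm must enforce a $\tO(m)$ time cap and abort with $\bot$ to satisfy item~\ref{it:time} of the nondeterministic-complexity definition. The mild cost you pay is the $\tO(n)$ preprocessing (LCA oracle plus the $\mathrm{anc}_q$ table), which the paper's incremental scheme avoids but which is absorbed by $\tO(m)$ anyway.
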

\begin{proof}
Let $c_j$ be a node that is expanded at some depth $q\geq 1$, and let $c_{j,1},\dots,c_{j,d}$ be the expanded nodes in $U_{1},\dots,U_{d}$, respectively at depth $q+1$ (or just $\bot$ for singletons).
Note that $G'_{j,1},\dots,G'_{j,d}$ (whichever exist) can all be constructed in total time that is linear in the size of $G'_j$.
Thus, the total time it takes to construct the auxiliary graphs for all the expansions at a single depth $q$ is linear in the size of the auxiliary graphs in the parent depth. Since the construction of the auxiliary graph of depth $0$ (i.e., the entire graph) can trivially be done in time $O(m)$ time, it follows by corollary~\ref{cor:total} that the construction time of the auxiliary graphs for all the expansions takes at most $\tO(m)$ time.
\end{proof}

\subsection{Reduction from a Decision Problem to a Total Function}\label{subsec:reduction}

Let us start with the formal statement of NSETH.
\begin{hypothesis}[Nondeterministic Strong Exponential-Time Hypothesis (NSETH)]
For every $\varepsilon>0$ there exists $k=k(\varepsilon)$ such that $k$-TAUT (the language of all $k$-DNF formulas that are tautologies) is not in $\NTIME(2^{n(1-\varepsilon)})$.
\end{hypothesis}
Note that deciding if a $k$-DNF formula is a tautology is equivalent to deciding if a $k$-CNF formula is satisfiable, thus the above hypothesis could be stated also using $k$-CNF appropriately.
Next, we define (deterministic) fine-grained reductions from a decision problem to a total function. Note that these are Turing reductions.

\begin{definition}[Fine-Grained Reduction from a Decision Problem to a Total Function]
Let $L$ be a language and $P$ be a total function, and let $T_L(\cdot)$ and $T_P(\cdot)$ be time bounds. We say that $(L,T_L)$ admits a fine-grained reduction to $(P,T_P)$ if for all $\eps>0$ there is a $\gamma>0$ and a deterministic Turing machine $M^P$ (with an access to an oracle that generates a solution to every instance of $P$) such that:

\begin{enumerate} 
\renewcommand{\theenumi}{\alph{enumi}}
\item \label{itit:a}
	$M^P$ decides $L$ correctly on all inputs when given a correct oracle for $P$. 
\item \label{itit:b}
  Let $\tilde{Q}(M^P,x)$ denote the set of oracle queries made by $M^P$ on input $x$ of length $n$. Then the query lengths obey the bound
$$
\forall x, \qquad
\DTIME(M^P,\card{x}) + \sum_{q\in \tilde{Q}(M,x)}(T_P(\card{q}))^{1-\varepsilon}\leq (T_L(n))^{1-\gamma}.
$$ 
\end{enumerate}
\end{definition}

We are now ready to prove the non-reducibility result under NSETH for total functions with small nondeterministic complexity. The proof arguments are similar to those of Carmosino et al. \cite{carmosino2016nondeterministic}.

\begin{theorem}\label{nondet:if}
Suppose $P$ is a total function with nondeterministic time complexity $T(m)$.
If for some $\delta>0$ there is a deterministic fine-grained reduction from $k$-SAT with time-bound $2^n$ to $P$ with time bound $T(m)^{1+\delta}$, i.e. from $(k\text{-SAT},2^n)$ to $(P,T(m)^{1+\delta})$, then NSETH is false. 
\end{theorem}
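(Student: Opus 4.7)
The plan is to turn the assumed fast deterministic reduction into a nondeterministic subexponential algorithm for $k$-TAUT, contradicting NSETH once $k$ is chosen suitably. Given a candidate $k$-DNF tautology $\phi$, I would negate to form the $k$-CNF $\psi=\neg\phi$ and feed $\psi$ to the deterministic reduction machine $M^P$; each oracle query $q$ issued by $M^P$ I would simulate nondeterministically using the verifier $M$ guaranteed by the nondeterministic time complexity $T(m)$ of $P$. Concretely, on each such call I guess a witness $g$ and run $M(q,g)$: if it returns $\bot$, I abort the current branch (outputting ``not a tautology''); otherwise I hand $M(q,g)\in\SOL(q)$ back to $M^P$ as the oracle's answer. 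Once $M^P$ halts, the algorithm outputs ``tautology'' iff $M^P$ reports that $\psi$ is unsatisfiable.

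Correctness will follow from properties (a)--(c) of the nondeterministic complexity of $P$ together with property (a) of the fine-grained reduction. If $\phi$ is a tautology, then $\psi$ is unsatisfiable; by property (b), for every query $q$ there is some $g_q$ with $M(q,g_q)\in\SOL(q)$, so on the branch using all these $g_q$'s the reduction $M^P$ sees a legitimate oracle for $P$ and correctly reports UNSAT, and the algorithm accepts. Conversely, if $\phi$ is not a tautology, then $\psi$ is satisfiable; on every branch, property (c) ensures each non-$\bot$ answer lies in $\SOL(q)$, so either the branch aborts on a $\bot$, or $M^P$ is fed a legitimate oracle and correctly reports SAT; in both cases the branch rejects. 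Hence the algorithm accepts iff $\phi$ is a tautology.

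For the time bound, I would set $\varepsilon:=\delta/(1+\delta)>0$, so that $(1+\delta)(1-\varepsilon)=1$, and invoke the definition of fine-grained reduction with this $\varepsilon$ to obtain a corresponding $\gamma>0$. The deterministic work of $M^P$ is then at most $(2^n)^{1-\gamma}=2^{n(1-\gamma)}$, and each nondeterministic oracle simulation on query $q$ takes time $T(\card{q})=T_P(\card{q})^{1/(1+\delta)}=T_P(\card{q})^{1-\varepsilon}$; summing, property (b) of the reduction yields
\[
  \sum_{q} T(\card{q}) \;=\; \sum_{q} T_P(\card{q})^{1-\varepsilon} \;\le\; 2^{n(1-\gamma)},
\]
so the total nondeterministic running time is $O(2^{n(1-\gamma)})\le 2^{n(1-\gamma/2)}$ for large enough $n$. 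This places $k$-TAUT in $\NTIME(2^{n(1-\gamma/2)})$ for every $k$, and choosing $k=k(\gamma/2)$ from NSETH delivers the desired contradiction.

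The main subtlety I anticipate is that $P$ is a total function (a relation) rather than a single-valued problem, so different nondeterministic branches of the simulation may supply $M^P$ with different elements of $\SOL(q)$ on the same query $q$. Property (a) of the fine-grained reduction as defined in Section~\ref{subsec:reduction} handles this cleanly, since it demands correctness of $M^P$ against \emph{any} legitimate oracle for $P$; the rest of the argument is quantitative bookkeeping calibrated by the choice $\varepsilon=\delta/(1+\delta)$, which is precisely the exponent that trades the $T(m)$ nondeterministic upper bound we have for $P$ against the $T(m)^{1+\delta}$ deterministic bound used to define the reduction.
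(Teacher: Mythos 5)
Your proposal is correct and follows essentially the same route as the paper's proof: negate $\phi$ to obtain a $k$-CNF, simulate the deterministic reduction $M^P$ while answering each oracle query with the nondeterministic verifier for $P$, abort on any $\bot$, and negate the simulation's verdict; correctness uses exactly properties (a)--(c) of the nondeterministic complexity together with property (a) of the reduction, and the running time is controlled by property (b). Your explicit choice $\varepsilon=\delta/(1+\delta)$, which makes $T_P(\cdot)^{1-\varepsilon}=T(\cdot)$ exactly, is a slightly cleaner calibration than the paper's ``some $0<\varepsilon<\delta$,'' but it is the same bookkeeping.
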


\begin{proof}
We will use the assumption of the theorem to describe a nondeterministic algorithm for $k$-TAUT that refutes NSETH.
Let $\phi$ be an instance of $k$-TAUT, and note that $\phi \in k$-TAUT iff $\neg \phi \notin k$-SAT.
Our nondeterministic algorithm $A$ first computes the CNF formula $\neg \phi$, then simulates the assumed reduction $M_1$ from $k$-SAT to $P$ on $\neg \phi$, and eventually outputs the negation of the simulation's answer, or $Reject$ if the simulation returns $\bot$. 

Let $M_2$ be the Turing Machine showing that $P$ has nondeterministic time complexity $T(m)$. 
Whenever the reduction $M_1$ produces a query to $P$, our algorithm $A$ executes $M_2$ on this query with some guess string $g$. Let $g_i$ be the guess string used for the $i^{th}$ query to $P$ made by $M_1$. If any of the executions of $M_2$ throughout the simulation outputs $\bot$, then $A$ stops and outputs $Reject$. Otherwise (all executions output valid answers), the simulation continues until $M_1$ terminates. At this point, the output of $M_1$ must be correct, and our algorithm $A$ outputs the opposite answer.

Let us argue about the correctness of our algorithm. First, it only outputs $Accept$ if the guesses and all answers to the $P$-queries were correct and then $M_1$ rejected, meaning that $\neg \phi \notin k$-SAT i.e., $\phi \in k$-TAUT.
Second, for every yes-instance $\phi \in k$-TAUT there is at least one sequence of guesses $g_1,g_2,\ldots$ that makes $A$ output $Accept$, due to the correctness of the reduction $M_1$ and the fact that $M_2$ nondeterministically computes $P$ correctly.
Finally, the running time of $A$ can be upper bounded by
$$
\DTIME(M_1) + \sum_{q\in \tilde{Q}(M_1,x)}T(\card{q}) \leq \DTIME(M_1)+\sum_{q\in \tilde{Q}(M_1,x)}T(\card{q})^{1+\delta-\eps}\leq (2^n)^{1-\varepsilon'}
$$
for $0<\varepsilon<\delta$ where the last inequality is due to the reduction from $k$-SAT to $P$, $\DTIME(M_1)$ is the time of operations done by $M_1$, $\tilde{Q}(M_1,x)$ is the queries made by $M_1$ to the $P$-oracle on an input $x$, and the last inequality follows for some $\varepsilon'(\varepsilon)>0$ because $M_1$ is a correct fine-grained reduction.
Thus, $A$ refutes NSETH.

\end{proof}

Since the construction of a cut-equivalent tree is a total function, and by theorem~\ref{Theorem:Algorithm} its nondeterministic complexity is $\tO(m)$, applying Theorem~\ref{nondet:if} implies that any deterministic reduction from SETH to the construction of a cut-equivalent tree that implies a lower bound of $\Omega(m^{1+\delta})$, for some $\delta>0$, would refute NSETH, concluding Theorem~\ref{thm3}.

\section{Conditional Lower Bound for  \APMF}\label{sec:lower}

In this section we prove a conditional lower bound for \APMF
in undirected graphs with node capacities.
Our construction is inspired by the one in~\cite{KT18}, which was designed for directed graphs with edge capacities, but it adopts it using our new trick described in the introduction.
In fact, readers familiar with the reduction in~\cite{KT18} may notice that we had to tweak it a little, making it simpler in certain ways but more complicated in others.  This was necessary in order to apply our new trick successfully to it.

The starting point for our reduction is the \TOV problem.
\begin{definition}[\TOV]
Given three sets $U_1, U_2, U_3 \subseteq \{0,1\}^d$ containing $n$ binary vectors each, over dimension $d$, decide if there is a triple $(\alpha,\beta,\gamma)$ of vectors in $U_1 \times U_2 \times U_3$, whose dot product is $0$. That is, a triple for which for all $i \in [d]$ at least one of $\alpha[i],\beta[i],\gamma[i]$ is equal to $0$. 
\end{definition}

An adaptation of the reduction by Williams \cite{Wil05} shows that  \TOV cannot be solved in $O(n^{3-\eps})$ time for any $\eps>0$ and $d=\omega(\log{n})$, unless SETH is false (see \cite{ABV15}).
For us, it suffices to assume the milder conjecture that \TOV cannot be solved in  $O(n^{3-\eps})$ time when $d=n^{\delta}$, for all $\eps,\delta>0$.
Refuting this conjecture has important implications beyond refuting SETH \cite{GIKW17,ABDN18}, e.g. it refutes the Weighted Clique Conjecture.

The high level structure of the reduction is the following: create three layers of nodes that correspond to the three sets of vectors, with additional two layers in between them that correspond to the coordinates. These additional layers help keep the number of edges small by avoiding direct edges between pairs of vectors.
Among other things, we utilize the trick described in the introduction and set the capacity of the nodes in the leftmost and rightmost sides to be $1$, while making the other capacities much larger. This way a flow would not gain too much from crisscrossing through these nodes. Formally, we prove the following.

\begin{lemma}\label{ProofsCapLemma}
\TOV over vector sets of size $n$ and dimension $d$ can be reduced to \APMF in undirected graphs with $\Theta(n\cdot d)$ nodes, $\Theta(n\cdot d)$ edges, and node capacities in $[2n^2d]$.
\end{lemma}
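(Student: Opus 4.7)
The plan is to adapt the reduction from \TOV to \APMF in directed graphs of~\cite{KT18} to the undirected node-capacitated setting, replacing edge orientations by the crisscrossing trick of the introduction.

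\emph{Construction.} I would build an undirected graph $G'$ on five layers $U_1, C^{(1)}, U_2, C^{(2)}, U_3$. The three vector layers each contain $n$ vertices (one per vector), and the two coordinate layers contain $\Theta(nd)$ vertices each -- for instance, one ``port'' vertex $c^{(1)}_{\beta,i}$ per pair $(\beta,i) \in U_2 \times [d]$. Edges lie only between consecutive layers, with the specific conditions chosen so that each coordinate vertex is adjacent only to the relevant $\beta$ in $U_2$ and to a sparse subset of vector-layer vertices dictated by the bits of the vectors at coordinate $i$, keeping the total number of edges at $\Theta(nd)$. The design intent is that a 4-hop path $\alpha \to c^{(1)} \to \beta \to c^{(2)} \to \gamma$ locally witnesses orthogonality of the triple $(\alpha,\beta,\gamma)$.

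\emph{Capacities.} Following the trick, I assign capacity $1$ to all vertices of $U_1 \cup U_3$ except the query endpoints, and a large capacity $H = \Theta(n^2 d)$ to all vertices of $C^{(1)} \cup U_2 \cup C^{(2)}$. In this way, any substantial flow must traverse direct 4-hop paths; any flow that crisscrosses back into $U_1 \cup U_3$ is bottlenecked by the unit capacities, and since there are only $2n-2$ such vertices besides the query endpoints, the total crisscrossing contribution is bounded polynomially below $H$.

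\emph{Correctness.} For each query pair $(\alpha,\gamma) \in U_1 \times U_3$, I would argue that the maximum $\alpha$-to-$\gamma$ flow in $G'$ is at least $H$ iff there exists $\beta \in U_2$ making $(\alpha,\beta,\gamma)$ orthogonal. In the YES case, a $\beta$-witnessed direct 4-hop path carries flow at least $H$ thanks to the high capacity of the middle layers. In the NO case, no direct 4-hop path encodes a valid orthogonality witness, so every flow must crisscross through the unit-capacity vertices of $U_1 \cup U_3$, bounding the max-flow by some $O(nd) \ll H$. An \APMF algorithm then decides \TOV by scanning all pairs' max-flow values for one exceeding the threshold~$H$.

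\emph{Main obstacle.} The trickiest part will be encoding the orthogonality disjunction ``$\alpha[i]=0 \vee \beta[i]=0 \vee \gamma[i]=0$'' through a path structure, whose edge-existences naturally compose by conjunction rather than disjunction. I anticipate addressing this by splitting each coordinate node into a constant number of parallel copies, one per disjunct, or by a more subtle combinatorial gadget mirroring that of~\cite{KT18}. A secondary challenge is calibrating capacities so that the YES and NO flow thresholds separate cleanly while fitting within the $[2n^2 d]$ budget, and verifying that the crisscrossing argument applies uniformly for every query pair (since the unit-capacity treatment of $U_1 \cup U_3$ depends on which endpoints play source and sink).
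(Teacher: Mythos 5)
Your main obstacle is not a detail to patch later; it is the crux, and the fix you sketch does not work. You want a single short path $\alpha \to c^{(1)} \to \beta \to c^{(2)} \to \gamma$ to witness that $(\alpha,\beta,\gamma)$ is \emph{orthogonal}. But orthogonality is a conjunction over all $d$ coordinates ($\forall i$, some entry is $0$), and a constant-length path can only verify a constant number of local conditions; splitting coordinate nodes into a constant number of parallel copies per disjunct does not change that. No sparse length-$O(1)$ path structure encodes ``$\forall i \in [d] : \neg(\alpha[i]\wedge\beta[i]\wedge\gamma[i])$'', which is exactly why the directed reduction in \cite{KT18} does not work this way either. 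Relatedly, your proposed logic is inverted relative to what does work: you aim for high flow $\geq H$ exactly in the \TOV YES case, but a single ``orthogonality path'' cannot naturally be the only high-capacity route when the vast majority of $\beta$'s are non-orthogonal with $(\alpha,\gamma)$ and generate many competing paths. There is also a sparsity problem: with one port $c^{(1)}_{\beta,i}$ per $(\beta,i)$ pair and edges from each $\alpha$ governed by bits at coordinate $i$, you get up to $\Theta(n^2 d)$ edges from $U_1$ to $C^{(1)}$, not $\Theta(nd)$.

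The paper's construction takes the opposite route and that inversion is what makes it work. Short paths $\alpha \to C_i^1 \to \beta_i \to C_i \to \gamma$ witness \emph{non}-orthogonality at coordinate $i$ (an existential, path-friendly condition: $\alpha[i]=\beta[i]=\gamma[i]=1$). The reduction is then a counting/deficit argument: if every triple $(\alpha,\beta,\gamma)$ is non-orthogonal, each of the $n$ $\beta$-clouds can absorb a full budget of $d$ units (one ``witness'' unit through $\beta_{i_\beta}$ plus $d-1$ units routed through an aggregator $\beta'$ of capacity $d-1$), and the total flow is exactly $nd$; if some $\beta$ is orthogonal with $(\alpha,\gamma)$, its cloud has no witness coordinate and can absorb at most $d-1$ units, forcing the max flow down to $\leq nd-1$. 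Finally, the crisscrossing trick is applied on top of this gadget: multiply all non-endpoint capacities by $2n$, so that backward flow through the $2n-2$ unit-capacity vertices of $V_1 \cup V_3$ contributes at most $2n-2$ and cannot erase the gap of $2n$ between $2n \cdot nd$ and $2n \cdot (nd-1)$. You have the right ingredients (layered structure, crisscrossing trick via unit capacities on the outer layers), but without the non-orthogonality-counting gadget and the accompanying sign flip, the reduction does not go through.
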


\begin{proof}
Given a \TOV instance $F$ we construct a graph $G$ with maximum flow size between some pair (among a certain set of pairs) bounded by a certain amount if and only if $F$ is a yes instance. 
For simplicity, we first provide a construction that has some of the edges directed (only where we will specifically mention that), and then we show how to avoid these directions. In addition, some of the edges will be capacitated as well, however the amount of such edges is small enough so that subdividing them with appropriate capacitated nodes will work too without a significant change to the size of the constructed graph. 

\paragraph{An Intermediate Construction with Few Directed Edges.}
To simplify the exposition, we start with a construction of a graph $G'$ in which most of the edges are undirected, but some are still directed (see figure~\ref{Figs:REDUCTION_CAP}).

\begin{figure}[!ht]
       \includegraphics[width=0.9\textwidth,left]{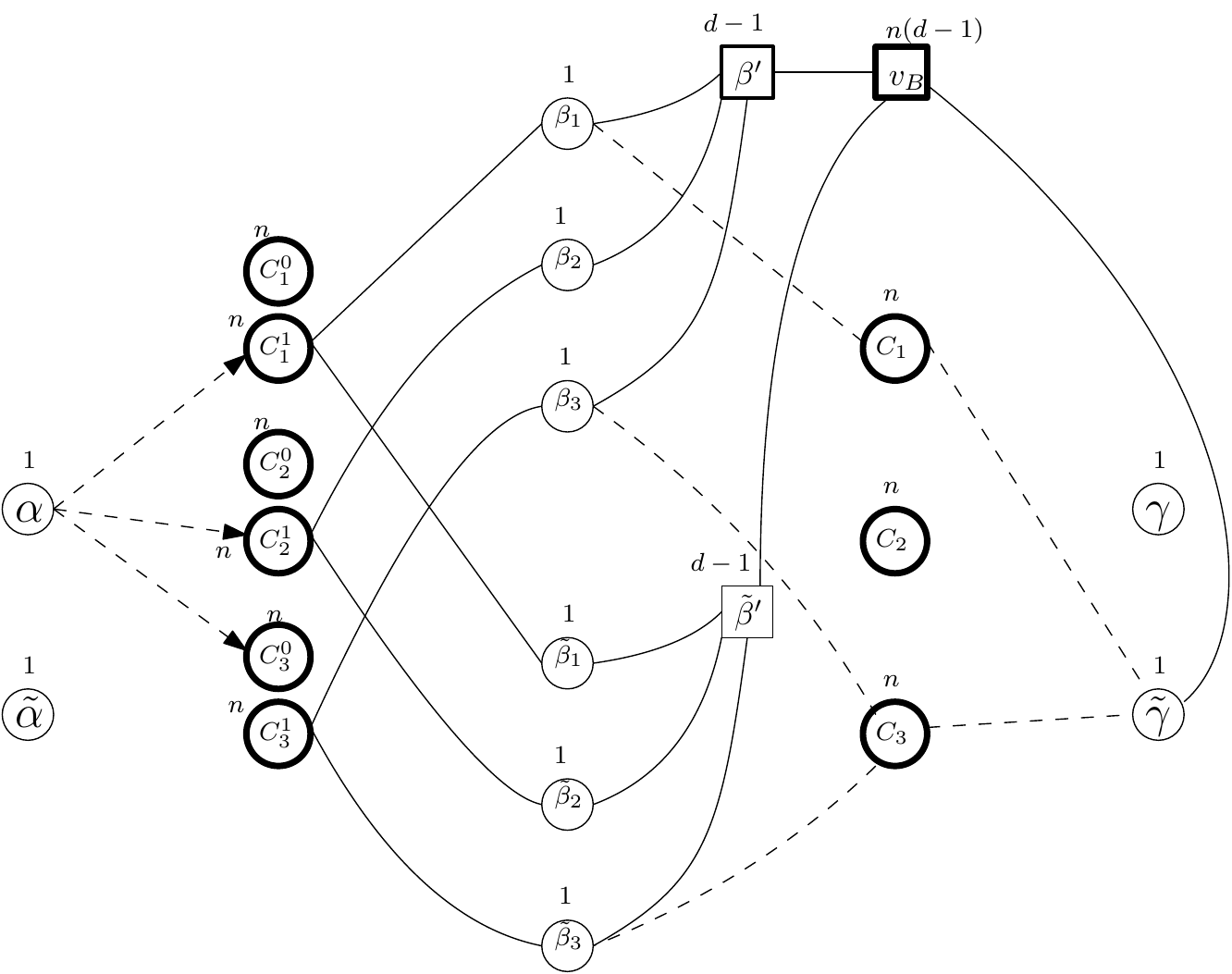}
   \caption[-]{
   An illustration of part of the reduction. Here, $U_1$, $U_2$, and $U_3$ have two vectors each; $\alpha$ and $\tilde{\alpha}$ in $U_1$, $\beta$ and $\tilde{\beta}$ in $U_2$, $\gamma$ and $\tilde{\gamma}$ in $U_3$. Bolder nodes correspond to nodes of higher capacity, and dashed edges are conditional on the input instance. For simplicity, we omit the edges not relevant to $\alpha$ and $\tilde{\gamma}$, and also the edges from nodes in $\{C_i^0\}_{i\in[3]}$ to nodes in $\{\beta',\tilde{\beta}'\}$. In this illustration, $\alpha=110$, $\beta=101$, $\tilde{\beta}=001$, and $\tilde{\gamma}=101$. Note that the triple $\alpha$, $\tilde{\beta}$, and $\tilde{\gamma}$ has an inner product $0$, and indeed the maximum flow from $\alpha$ to $\tilde{\gamma}$ is $2\cdot 3 -1=5$.
   }
   \label{Figs:REDUCTION_CAP}
\vspace{.1in}\hrule
\end{figure}

Our final graph $G$ will be very similar to $G'$. It will have the same nodes and edges except that all edges will be undirected and the capacities on the nodes will be a little different. 

We construct the graph $G'$ on $N$ nodes $V_1\cup V_2\cup V_3\cup A\cup B\cup \{v_B\}$. 
The layer $V_1$ contains a node $\alpha$ of capacity $1$ for every vector $\alpha\in U_1$.
$V_2$ contains $d+1$ nodes for every vector $\beta\in U_2$, $d$ nodes denoted by $\beta_i$ for every $i\in[d]$ and their capacity is $1$, plus a node denoted by $\beta'$ of capacity $d-1$. 
$V_3$ contains a node $\gamma$ of capacity $1$ for every vector $\gamma$ in $U_3$.
The intermediate layer $A$ contains $2d$ nodes: two nodes $C_i^{0}$ and $C_i^{1}$ of capacity $n$ for every coordinate $i\in [d]$. 
The other intermediate layer $B$ contains a node $C_i$ of capacity $n$ for every coordinate $i\in [d]$. 
Finally, the auxiliary node $v_B$ has capacity $n(d-1)$. 
With a slight abuse of notation, we will use the following symbols in the following ways: $\alpha$ will be either a node in $V_1$ or a vector in $U_1$; $\beta$ will be a vector in $U_2$;  $\gamma$ will be either a node in $V_3$ or a vector in $U_3$; and $C_i$ will be either a node in $B$ or a coordinate in $[d]$. The usage will be clear from context.

The edges of the network will be defined as follows. First, we describe the edges that depend on the given \TOV instance.
\begin{itemize}
\item For every $\alpha$ and $i\in [d]$, we add a \emph{directed} edge from $\alpha$ to $C_i^{0}$ if $\alpha [i]=0$, and a directed edge from $\alpha$ to $C_i^{1}$ if $\alpha [i]=1$. 
\item For every $\beta$, we add an (undirected) edge from $\beta_i$ to $C_i$ if $\beta [i]=1$.
\item  For every $\gamma$ and $i\in [d]$, we add an (undirected) edge from $C_i$ to $\gamma$ if $\gamma [i]=1$.
\end{itemize}
Moreover, there will be some (undirected) edges that are independent of the vectors.
For every $\beta$, we have an edge of capacity $1$ from $C_i^{0}$ to ${\beta}'$, and an edge of capacity $1$ from $C_i^{1}$ to $\beta_i$.
Also, for every $\beta$, we have an edge from $\beta_i$ to $\beta'$, and an edge from ${\beta}'$ to $v_B$. 
Finally, for every $\gamma$, we have an edge from $v_B$ to $\gamma\in V_3$. (Unless specified otherwise, these edges have no capacity constraints.)

The graph built has $N=n+2d+n\cdot d+n+1+d+n=\Theta(nd)$ nodes, at most $O(nd)$ edges, all of its capacities are in $[N]$, and its construction time is $O(Nd)$.
%

\medskip
The following two claims prove the correctness of this intermediate reduction.

\begin{claim}\label{claim:simple}
If every triple of vectors in $(U_1, U_2, U_3)$ have inner product at least $1$, then for all pairs $\alpha \in V_1,\gamma \in V_3$ the maximum-flow in $G'$ is at least $n\cdot d$.
\end{claim} 
\begin{proof}
Assume that every triple of vectors in $(U_1, U_2, U_3)$ has inner product at least $1$, and fix some $\alpha$ and $\gamma$. 
We will explain how to send $n\cdot d$ units of flow from $\alpha$ to $\gamma$ in $G'$.
By the assumption, for every $\beta$ there exist an $i\in [d]$ such that $\alpha[i]=\beta[i]=\gamma[i]=1$, and denote this index by $i_{\beta}$. 
Each $i_{\beta}$ induces a path $(\alpha\rightarrow C_{i_{\beta}}^1\rightarrow\beta_{i_{\beta}}\rightarrow C_{i_{\beta}}\rightarrow \gamma)$ from $\alpha$ to $\gamma$, and so we pass a single unit of flow through every one of them, in what we call the first phase. Note that so far, the flow sums up to $n$, and we carry on with describing the second phase of flow through nodes of the form $\beta'$.

We claim that for every $\beta$, an additional amount of $(d-1)$ units can pass through $\beta'$, which would add up to a total flow of $n(d-1)+n=nd$, concluding the proof. Indeed, for every $\beta$, we send flow in the following way. For every $i\in [d]\setminus i_{\beta}$, if $\alpha[i]=1$ then we send a single unit through $(\alpha\rightarrow C_i^1\rightarrow \beta_i\rightarrow \beta' \rightarrow  v_B\rightarrow \gamma)$, and otherwise we send a unit of flow through $(\alpha\rightarrow C_i^0\rightarrow \beta'\rightarrow v_B\rightarrow \gamma)$.

Since we defined the flow in paths, we only need to show that the capacity constraints are satisfied. Nodes of the form $C_i$ are only used in the first phase, and the flow through them equals $n$ in total, and so their flow is within the capacity. The node $v_B$ is only used in the second phase and has $n(d-1)$ units of flow passing through it, just as its capacity. 
For every $\beta$ and $i=i_{\beta}$, we pass in the first phase a single unit of flow through $\beta_i$. For every $\beta$ and $i\neq i_{\beta}$, we transfer in the second phase a unit of flow through $\beta_i$ if and only if $\alpha[i]=1$, thus it is bounded. For every $\beta'$, we pass in the second phase exactly $(d-1)$ units of flow through $\beta'$, preserving its capacity. For every $C_i^j\in N(\alpha)$ with $i\in [d]$ and $j\in \{0,1\}$, we pass a total of $n$ units of flow to nodes in $V_2$, one unit on each edge, thus the capacities are preserved, concluding the proof.

\end{proof}

\begin{claim}
If there is a triple of vectors $(\alpha_{\Phi}, \beta_{\Phi}, \gamma_{\Phi})\in (U_1, U_2, U_3)$ whose inner product is $0$, then the maximum-flow in $G'$ from $\alpha_{\Phi} \in V_1$ to $\gamma_{\Phi} \in V_3$ is at most $nd-1$.
\end{claim} 
\begin{proof}
Assume for contradiction that there exists such a flow of value at least $nd$, and denote it by $f$. 
Let $f=\{p_1,...,p_{\card{f}}\}$ be a description of $f$ as a (multi-)set of paths of single units of flow. 

For a node $x$, denote by $N(S)$ the set of all nodes adjacent to $x$.
By our construction, the total capacity of all nodes in $N(\alpha_{\Phi})$ sums up to $nd$ exactly. 
Therefore, $f$ must have all of the nodes in $N(\alpha_{\Phi})$ saturated.

Consider a node $C_i^j\in N(\alpha_{\Phi})$ for some $i\in [d]$ and $j\in \{0,1\}$.
Note that $C_i^j$ is saturated in $f$ while its capacity is $n$ and it has exactly $n$ edges adjacent to it (excluding the edges incoming from $V_1$) of capacity $1$ each.
Therefore, we get that every node in $N(C_i^j)\setminus V_1$ must receive a single unit of flow from $C_i^j$ in $f$. 
Hence, every $\beta$-cloud, which we define as all the nodes that are associated with a $\beta$, must have exactly $d$ flow paths in $f$ for which it is the first $\beta$-cloud that they pass through. We call this a \textit{first passing} of a path through a $\beta$-cloud.
In particular, for every $\beta$ and for every $i\in [d]$ such that $\alpha_{\Phi}[i]=1$ there must be a path $p_{\beta,i}$ in $f$ whose prefix is $(\alpha_{\Phi}, C_i^1, \beta_i,...)$. 

Our main claim is that the $\beta_{\Phi}$-cloud can only have up to $d-1$ flow paths that are first passing through it. 
Clearly, if there are more, then at least one of them does not pass through ${\beta_{\Phi}}'$ (whose capacity is only $d-1$), so name this path $p'$. 
We will argue that this path must be in conflict with one of the $p_{\beta,i}$ paths described above.

For some $i\in [d]$ the prefix of $p'$ must be $(\alpha_{\Phi}, C_i^1, \beta_{\Phi_i}, C_i,...)$, since this is the only way it can avoid the node ${\beta_{\Phi}}'$.
This can only happen for an $i \in [d]$ for which $\alpha[i]=\beta[i]=1$, or else those edges will not exist in $G$.
But since $(\alpha_{\Phi}, \beta_{\Phi}, \gamma_{\Phi})$ is a triple whose inner product is $0$, it must be that $\gamma_{\Phi} [i]=0$ and so the edge $\{C_i,\gamma\}$ is not in the graph. 
Hence, after $C_i$ this path can only go to a node $\tilde{\beta}_i$ for some $\tilde{\beta}$, and the (longer) prefix of $p'$ must be $(\alpha_{\Phi}, C_i^1, \beta_i, C_i, {\tilde{\beta}}_i ,...)$. 
Note that this is the same index $i$, and we know that $\alpha_{\Phi} [i] =1$.
Therefore, by the above, we know that there is another path $p_{\tilde{\beta},i}$ in $f$ that has ${\tilde{\beta}}_i$ as the third node on the path. 
(That is, there is already a path that is first-passing through ${\tilde{\beta}}_i$.) 
This is a contradiction to the fact that the capacity of ${\tilde{\beta}}_i$ is $1$.

\end{proof}

\paragraph{The Final Construction.}

The main issue with avoiding the directions on the edges between nodes in $V_1$ and $A$, is that additional $\alpha$'s might participate in the flow as well, potentially allowing one additional unit of flow to pass through. As described in the introduction, the solution is to multiply the capacities of all nodes that are not in $V_1\cup V_3$ by $2n$. This is how we get our final graph $G$ from $G'$. 
In the following we show how this modification concludes the proof of Lemma~\ref{ProofsCapLemma}.

\begin{claim}\label{claim:first}
If every triple of vectors in $(U_1, U_2, U_3)$ has inner product at least $1$, then for all pairs $\alpha \in V_1,\gamma \in V_3$ the maximum-flow in $G$ is at least $2n^2d$.
\end{claim} 
\begin{proof}
Since the flow that was defined in Claim~\ref{claim:simple} does not touch nodes in $V_1\cup V_3$, considering the same flow in $G$ but multiplied by $2n$, we get a new flow that is of size $nd\cdot(2n)$, concluding the proof.
\end{proof}

\begin{claim}\label{claim:second}
If there is a triple of vectors $(\alpha_{\Phi}, \beta_{\Phi}, \gamma_{\Phi})\in (U_1, U_2, U_3)$ whose inner product is $0$, then the maximum-flow in $G$ from $\alpha_{\Phi} \in V_1$ to $\gamma_{\Phi} \in V_3$ is at most $2n^2d -1$.
\end{claim} 
\begin{proof}
Let $f$ be the maximum flow from $\alpha_{\Phi}$ to $\gamma_{\Phi}$ in $G$.
The paths in $f$ can be divided into two kinds: paths that pass through nodes in $(V_1\cup V_3)\setminus \{\alpha_{\Phi},\gamma_{\Phi}\}$, and paths that do not.
The total contribution of paths of the first kind can be upper bounded by the size of $(V_1\cup V_3)\setminus \{\alpha_{\Phi},\gamma_{\Phi}\}$, which is $2n-2$, since the capacity of all nodes in this set is $1$.
On the other hand, paths from the second kind must obey the directions of the directed edges in $G'$ and can therefore be used in $G'$, except that in $G$ their multiplicity (the amount of flow we push through them) can be larger by a factor of $2n$.
Therefore, we can upper bound the total contribution of paths of the second kind by $2n$ times the maximum flow in $G'$, which is $(nd-1)(2n)$.
Thus, the overall flow is at most $(nd-1)(2n)+2n-2=2n^2d-2$, which proves Claim~\ref{claim:second}.
\end{proof}
Since we showed a gap of at least one unit of flow between the yes and the no instances, the proof of Lemma~\ref{ProofsCapLemma} is concluded.
\end{proof}

\section{Open Problems}\label{Section:Conclusion}

Many gaps and open questions around the complexity of maximum flow remain after this work. We highlight a few for which our intuitions may have changed following our discoveries.

\begin{itemize}
\item Can we break the $mn$ barrier also when the graphs have arbitrary (polynomial) capacities? Our result gives hope that this may be possible.

\item Can we reduce the directed case to the undirected, node-capacitated case? Because of our lower bound, it is likely that both of these cases will end up having the same time complexity, and so such a reduction may be possible.

\item Can we generalize the nondeterministic algorithm to be for arbitrary edge-capacities? Note that the only barrier for achieving that goal is finding lower bounds witness for flows from a certain source to other nodes.

\item Can we prove any conditional lower bound for All Pairs Max Flow in undirected graphs with edge capacities? This is obviously the most important and intriguing open question in this context. Our new deterministic and nondeterministic upper bounds make this task more challenging than previously thought. 
\end{itemize}

\section{Acknowledgements}
We would like to thank Arturs Backurs for asking about the nondeterministic complexity of the problems, Marvin Kunnemann for pointing out the connection to certifying algorithms, and Richard Peng for helpful comments on the different known upper bounds for \MF.

{\small
\bibliographystyle{alphaurlinit}
\bibliography{robi}
}

\end{document}